\documentclass[%
 reprint,
 showpacs,
 amsmath,amssymb,amsfonts,
 aps,pra,
]{revtex4-1}

\usepackage[utf8]{inputenc}
\usepackage[T1]{fontenc}
\usepackage[english]{babel}

\usepackage{graphicx}
    \graphicspath{{./figs/}}

\usepackage{amsthm,dsfont}
\usepackage{braket}

\newcommand{\hilb}[1]{\mathcal{H}_{#1}}
\newcommand{\ctr}[1]{{{#1}^\dagger}}
\newcommand{\PQ}[0]{\mathcal{P}^Q_d}
\newcommand{\PQd}[1]{\mathcal{P}^Q_{#1}}
\newcommand{\mPQ}[0]{\mathcal{P}^Q_{\max}}
\newcommand{\PD}[0]{\mathcal{P}^\delta}

\newtheorem{definition}{Definition}
\newtheorem{theorem}{Theorem}
    \theoremstyle{remark}
\newtheorem*{remark}{Remark}

\begin{document}


\title{Quantumness and the role of locality on quantum correlations}

\author{G. Bellomo}
\affiliation{Instituto de Física La Plata (IFLP), CONICET, and Departamento de Física, Facultad de Ciencias Exactas, Universidad Nacional de La Plata, 1900 La Plata, Argentina}
\author{A. Plastino}%
\affiliation{Instituto de Física La Plata (IFLP), CONICET, and Departamento de Física, Facultad de Ciencias Exactas, Universidad Nacional de La Plata, 1900 La Plata, Argentina}
\author{A. R. Plastino}
\affiliation{CeBio and Secretaría de Investigaciones,
Universidad Nacional del Noroeste de la
Provincia de Buenos Aires (UNNOBA-CONICET),
R. Saenz Peña 456, Junín, Argentina}
\date{\today}

\begin{abstract}
Quantum correlations in a physical system are usually studied with respect to a unique (fixed) decomposition of the system into subsystems, without fully exploiting the rich structure of the state-space. Here, we show several examples in which the consideration of different ways to decompose a physical system enhances the quantum resources and accounts for a more flexible definition of quantumness-measures. Furthermore, we give a new perspective regarding how to reassess the fact that local operations play a key role in general quantumness-measures that go beyond entanglement ---as discord-like ones. We propose a way to quantify the maximum quantumness of a given state. Applying our definition to low-dimensional bipartite states, we show that different behaviours are reported for separable and entangled states than those corresponding to the usual measures of quantum correlations. We show that there is a close link between our proposal and the criterion to witness quantum correlations based on the rank of the correlation matrix, proposed by Daki\'c, Vedral and Brukner.
\end{abstract}

\pacs{03.67.-a, 03.67.Hk, 03.65.Ud}


\maketitle

\section{Introduction}
Contemporary physics' recent technological and theoretical progress shows that quantum computation is a feasible and not-so-far-away perspective (see Refs.~\cite{Feng13,*Ren14,*Brow14,*Zhan14,*Zu14,*Mong14,*Howa14,*Goss15} and references therein). Novelties would bring significant improvement in the performance of information processing tasks, and the main ingredient involved resides in (quantum) correlations that cannot be implemented with classical systems. Hence, the study of quantum correlations have been one of the most pursued issues in quantum physics of the last decade (see, e.g., the excellent reviews of the Horodeckis~\cite{Horo09} and Modi \textit{et al.}~\cite{Modi12}). Quantum Entanglement and Quantum Discord (QD) are two of the main families of quantum correlations measures, which are closely related to the way in which a system can be decomposed as a mixture of product states. A non-entangled (or separable) state $\rho^{AB}_{sep}$ over the Hilbert space $\hilb{A}\otimes\hilb{B}$, with respect to the bipartition $A|B$, can be written as a convex combination of product states as $\rho^{AB}_{sep}=\sum_{k}{p_k\rho^A_k\otimes\rho^B_k}$, with $p_k\geq0$ and ${\sum_k{p_k}=1}$. In turn, a classically-correlated (CC) state $\rho^{AB}_{clas}$, with respect to the same bipartition, can be expressed as a mixture of local orthogonal projectors as $\rho^{AB}_{clas}=\sum_{ij}{p_{ij}\ket{i^A}\bra{i^A}\otimes\ket{j^B}\bra{j^B}}$, where $p_{ij}\geq0$, ${\sum_{ij}{p_{ij}}=1}$ and $\braket{i^{A(B)}|i'^{A(B)}}=\delta_{ii'}$, with $0\leq i\leq dim(\hilb{A(B)})$. That is, $\rho^{AB}_{clas}$ is diagonal in a product basis $\{\ket{i^A}\otimes\ket{j^B}\}$. A state that is not CC, is said to be quantum-correlated (QC).

What is the main difference between a non-correlated (or product) state, $\rho^{AB}_{prod}=\rho^A\otimes\rho^B$, and a CC one (as $\rho^{AB}_{clas}$ given above), with regards to their quantum capabilities? One may suspect that a CC state is as useless as a product one when performing an information task that necessarily involves quantum resources. However, {\it this is not true}. Let us consider that one has $\rho^{AB}_{clas}$ and also one has access to other local degrees of freedom, i.e. that our initial \textit{system $+$ environment} state is ${\rho^{ext}=\eta^{\bar{A}}\otimes\rho^{AB}_{clas}\otimes\eta^{\bar{B}}}$, where $\eta^{\bar{A}}$ ($\eta^{\bar{B}}$) depicts the state of the environmental degrees of freedom in $A$ ($B$). \textit{Now, it is easy to show that there are local observables with respect to whom the state is quantumly correlated.} It suffices to notice that $\rho^{A'B'}=\text{Tr}_{env}{U\rho^{ext}\ctr{U}}$ is, in general, QC with respect to $A'|B'$, where `$env$' denotes the environment degrees of freedom and $U$ denotes a local unitary operation (LU) that respects that local bipartition, i.e. $U=U^{\bar{A}A}\otimes U^{B\bar{B}}$, and accounts for the inspection of local observables.

Consideration of different observables of a quantum system leads to alternative descriptions, and quantum correlations are relative to such observables-election. Zanardi~\cite{Zana01} first noticed the effect of this relative character vis-a-vis the quantum entanglement of multiqubit states and proposed a formalization under a general algebraic framework~\cite{Zana04}. Later, Barnum \textit{et al.}~\cite{Barn04} gave a subsystem-independent notion of entanglement. Harshmann and Ranade~\cite{Hars11} proved that all pure states of a finite-dimensional (and unstructured) Hilbert space are equivalent as entanglement resources in the ideal case that one has complete access and control of observables. Given that CC implies separability and given that the question about separability becomes relative to the preferred observables (the ones that determine the local subsystems), the question about the correlations on CC states becomes relative too. It is worth noting that these ideas have been successfully applied, for example, to the investigation of quantum phase transitions~\cite{Wu04,Somm04,Cakm15} and to quantum entanglement in systems of indistinguishable particles~\cite{Bena10,*Bala13a,*Bala13b,*Iemi14,*Kill14}.

In this work, we focus on the less studied situation of mixed states under a locality restriction: we allow only \textit{local} unitary operations (over the enlarged Hilbert space) in order to explore the observables' subspaces of each local subsystem. In the pure state scenario, global unitary operations lead to equivalence 
regarding quantum correlations (in that case, entanglement). As expected, mixedness and locality impose some restrictions on the achievable quantum correlations when considering the mentioned relative character (see Appendix~\ref{sec:QCunderU} for a discussion on the role of mixedness on discord-like measures under global unitaries, for states in $\mathbb{C}^4$).

We adopt here the distinction between \textit{quantum correlations} and \textit{quantumness} of correlations, previously discussed by Giorgi \textit{et al.} in terms of genuine/non-genuine quantum correlations~\cite{Gior11} and by Gessner \textit{et al.}~\cite{Gess12}. It is interesting to note that Ollivier and Zurek, in their seminal paper~\cite{Olli01}, have already coined the idea that QD accounts for the quantumness of correlations, and not to the amount of quantum correlations \textit{per se}.

The paper is organized as follows. In section~\ref{sec:QC&sub} we discuss the main result, namely the way in which general quantum correlations depend on the subsystem-decomposition of a given quantum system. In section~\ref{sec:PQ} we advance a proposal for quantifying the mentioned effect from an information-theoretic perspective, which we will call Potential Quantumness, and we prove several interesting properties of such measure. Finally, in section~\ref{sec:PD} we specialize our study to discord-like correlations and display some of their features when applied to simple low-dimensional models. Section~\ref{sec:Conc} is devoted to a summary and conclusions.

\section{Quantumness and subsystems} \label{sec:QC&sub}
Let us give a concrete example. We begin with a CC state of two qubits as $\rho^{AB}_{clas}=p\ket{0^A}\bra{0^A}\otimes\ket{0^B}\bra{0^B}+(1-p)\ket{1^A}\bra{1^A}\otimes\ket{1^B}\bra{1^B}$, with $0\leq p\leq1$, where $\{\ket{i}\}_{i=0,1}$ is the standard (computational) basis. If we have access to one auxiliary qubit on each location, we can set the extended state to be $\rho^{ext}=\ket{0^{\bar{A}}}\bra{0^{\bar{A}}}\otimes\rho^{AB}_{clas}\otimes\ket{0^{\bar{B}}}\bra{0^{\bar{B}}}$. Now, any LU operation $U^{\bar{A}A}\otimes U^{B\bar{B}}$ accounts for different partitions of the local subsystems $\bar{A}|A$ ($B|\bar{B}$) into new subsystems $\bar{A'}|A'$ ($B'|\bar{B'}$). For example, if $U^{\bar{A}A}=U_{\text{cH}} U_{\text{S}}=(U^{B\bar{B}})^\dagger$, where $U_{\text{cH}}$ is a controlled Hadamard gate and $U_{\text{S}}$ a swap gate, the transformed reduced state is $\rho^{A'B'} = p\ket{0^{A'}}\bra{0^{A'}}\otimes\ket{0^{B'}}\bra{0^{B'}}+(1-p)\ket{+^{A'}}\bra{+^{A'}}\otimes\ket{+^{B'}}\bra{+^{B'}}$, where $\ket{\pm^{A'(B')}}=\frac{1}{\sqrt{2}}(\ket{0^{A'(B')}}\pm\ket{1^{A'(B')}})$. The new state, $\rho^{A'B'}$, is not CC anymore. Thus, we have revealed some hidden quantumness of $\rho^{AB}_{clas}$, just by considering a transformation over local degrees of freedom. This can not be done if the state is uncorrelated: for $\rho^{AB}_{prod}$, it is straightforward to show that the same procedure gives a new uncorrelated state $\rho^{A'B'}_{prod}$. \textit{This result clearly distinguishes $\rho^{AB}_{clas}$ from $\rho^{AB}_{prod}$ with regards to quantum information processing capabilities.} Such feature holds for every non-product (i.e. correlated) state: \textit{if $\rho^{AB}$ is a bipartite correlated state and $A$ and/or $B$ has local access to auxiliary degrees of freedom, then it is possible to find quantum correlations between $\rho^{AB}$.} We take this result to be our first theorem.

    \begin{theorem} \label{theo:qc1}
Let $\rho^{AB}$ be a non-product density operator over $\hilb{A}\otimes\hilb{B}$. Let $\eta^{\bar{A}}$ and $\eta^{\bar{B}}$ be the `ready' states of two ancillary systems. Then, for the extended state $\eta^{\bar{A}}\otimes\rho^{AB}\otimes\eta^{\bar{B}}$ it is possible to find a subsystem decomposition that preserves the local bipartition and possess quantum correlations.
    \end{theorem}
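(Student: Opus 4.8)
The plan is to split on whether $\rho^{AB}$ is already quantum-correlated (QC). If it is, I would be done immediately: since the ancillas sit in product ``ready'' states, tracing them out of $\eta^{\bar{A}}\otimes\rho^{AB}\otimes\eta^{\bar{B}}$ returns $\rho^{AB}$ itself, so the trivial decomposition $A'=A,\ B'=B$ (identity LU) already exhibits quantum correlations, and this respects the local bipartition. The real work is the case where $\rho^{AB}$ is classically-correlated (CC) yet non-product. Writing it in its diagonalizing product basis as $\rho^{AB}=\sum_{ij}p_{ij}\ket{i^A}\bra{i^A}\otimes\ket{j^B}\bra{j^B}$, non-productness says the matrix $[p_{ij}]$ has rank $\geq 2$; equivalently there are two values $i_1\neq i_2$ (both with $p_{i_1},p_{i_2}>0$) whose conditional $B$-distributions differ. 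In operator terms, setting $\mu_i:=\sum_j p_{ij}\ket{j^B}\bra{j^B}$ and $p_i:=\sum_j p_{ij}$, this reads $p_{i_2}\mu_{i_1}\neq p_{i_1}\mu_{i_2}$.

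Next I would construct an explicit local unitary on Alice's side that rotates these two distinguished branches into non-orthogonal ancilla states while keeping all other branches orthogonal (this is the general version of the controlled-Hadamard trick of the worked example, which sent one branch to $\ket{+}$). Take $\hilb{\bar A}\cong\mathbb{C}^{\dim\hilb{A}}$ with ready state $\ket{0^{\bar A}}$ and define the controlled unitary $U^{\bar A A}=\sum_i V_i^{\bar A}\otimes\ket{i^A}\bra{i^A}$, where $V_i\ket{0^{\bar A}}=\ket{\alpha_i}$ with $\ket{\alpha_i}=\ket{i^{\bar A}}$ for $i\neq i_1$ and $\ket{\alpha_{i_1}}=\cos\theta\,\ket{i_1^{\bar A}}+\sin\theta\,\ket{i_2^{\bar A}}$, $\sin\theta\neq0$. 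Acting with $U^{\bar A A}\otimes I^{B\bar B}$, then declaring $A':=\bar A$ and $B':=B$ and tracing out the leftover factors $A$ and $\bar B$ as environment, produces the reduced state $\sigma^{A'B'}=\sum_i P_i\otimes\mu_i$, with $P_i:=\ket{\alpha_i}\bra{\alpha_i}$; by design the only non-commuting pair among the $P_i$ is $(P_{i_1},P_{i_2})$.

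Finally I would certify $\sigma^{A'B'}$ is QC via a commutator witness. Any CC state is diagonal in a product basis, so its marginal $\sigma^{A'}$ is diagonal in the matching $A'$-basis, forcing $[\sigma^{A'B'},\sigma^{A'}\otimes I]=0$; hence a nonzero value of this commutator certifies QC. Using $\sigma^{A'}=\sum_i p_i P_i$, a short computation gives $[\sigma^{A'B'},\sigma^{A'}\otimes I]=\sum_{i<k}[P_i,P_k]\otimes(p_k\mu_i-p_i\mu_k)$, which by the above choice collapses to the single term $[P_{i_1},P_{i_2}]\otimes(p_{i_2}\mu_{i_1}-p_{i_1}\mu_{i_2})$. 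This is nonzero because $[P_{i_1},P_{i_2}]\neq0$ (the two vectors are distinct and non-orthogonal) and $p_{i_2}\mu_{i_1}-p_{i_1}\mu_{i_2}\neq0$ (the chosen conditionals differ), so $\sigma^{A'B'}$ violates the necessary condition and is quantum-correlated.

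The step I expect to be delicate is precisely the non-vanishing of the witness. For a generic assignment $\ket{\alpha_i}$ the commutator is a sum of many terms $[P_i,P_k]\otimes(p_k\mu_i-p_i\mu_k)$ that could in principle cancel, so the crux of the argument is to \emph{localize} the induced non-orthogonality to a single controlled pair $(i_1,i_2)$ that the rank-$\geq2$ condition guarantees to exist; this is what makes the conclusion clean and unconditional rather than merely generic. A secondary point to address is the case of mixed ready states $\eta^{\bar A},\eta^{\bar B}$: if they are not pure one either enlarges the ancillas to purify them or restricts to their supports, neither of which affects the construction.
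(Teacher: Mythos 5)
Your proof is correct in its core, and it takes a genuinely different route from the paper's. The paper's argument, after the same QC/CC case split that you make, is a two-line reduction: the composition LU-plus-partial-trace acts on the reduced state as an arbitrary local channel (Stinespring), and the result of Daki\'c \emph{et al.}~\cite{Daki10} that local channels can convert any non-product CC state into a QC one then finishes the proof. You instead prove that cited fact yourself for the case at hand: you extract from non-productness a pair $i_1\neq i_2$ of branches with differing conditional states ($p_{i_2}\mu_{i_1}\neq p_{i_1}\mu_{i_2}$ --- and your reduction of non-productness to this pairwise condition is valid, since $p_k\mu_i=p_i\mu_k$ for all pairs forces $\mu_i=p_i\rho^B$ for all $i$), imprint exactly that pair onto non-orthogonal ancilla states via an explicit controlled unitary (a clean generalization of the controlled-Hadamard example in Section~\ref{sec:QC&sub}), and certify quantumness of $\sigma^{A'B'}=\sum_i P_i\otimes\mu_i$ by the commutator witness $[\sigma^{A'B'},\sigma^{A'}\otimes\mathds{1}]\neq0$, which your construction correctly collapses to the single term $[P_{i_1},P_{i_2}]\otimes(p_{i_2}\mu_{i_1}-p_{i_1}\mu_{i_2})$. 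What each approach buys: yours is self-contained and constructive, producing an explicit certificate of quantumness; the paper's is much shorter but leans entirely on an external theorem.

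Two caveats. First, your stated condition $\sin\theta\neq0$ is insufficient: at $\theta=\pi/2$ one gets $\ket{\alpha_{i_1}}=\ket{i_2^{\bar A}}=\ket{\alpha_{i_2}}$, the two projectors coincide, and the witness vanishes. You need the vectors non-orthogonal \emph{and} non-parallel, i.e. $0<|\sin\theta|<1$; taking $\theta=\pi/4$ fixes this trivially. Second, your closing remark that mixed ready states are handled by ``purifying or restricting to supports'' without affecting the construction is false in general. If $\eta^{\bar A}$ is maximally mixed, every A-side operator obtainable by coupling, LU and tracing is constrained by unitality: for a two-qubit CC state the operators $T_i=\text{Tr}_{\text{env}}[U(\tfrac{1}{2}\mathds{1}\otimes\ket{i^A}\bra{i^A})U^\dagger]$ satisfy $T_0+T_1=\mathds{1}$, hence commute, and the reduction is always CC --- so for maximally mixed ancillas the conclusion actually fails. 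This does not damage your proof of the intended statement, because in this paper the `ready' states are pure by convention (cf.\ Definition~\ref{def:PQ}, and the paper's own proof explicitly chooses $\eta^{\bar{A}}=\eta^{\bar{B}}=\ket{0}\bra{0}$); but you should either drop that remark or replace it with the correct observation that any non-maximally-mixed $\eta^{\bar A}$ still works, since one can then pick unitaries $V,W$ with $[V\eta^{\bar A} V^\dagger,W\eta^{\bar A} W^\dagger]\neq0$ and run the same controlled construction and witness.
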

    
    \begin{proof}
The statement can be proved straightforwardly. If $\rho^{AB}$ is non-product state then it can be CC or QC. If it is QC then there is no need to extend our system, it already possess quantum correlations. If it is CC, we can choose the auxiliary states to be pure, $\eta^{\bar{A}}=\eta^{\bar{B}}=\ket{0}\bra{0}$. Then, over the extended state $\ket{0}\bra{0}\otimes\rho^{AB}\otimes\ket{0}\bra{0}$, we can apply LU operations $U_{\bar{A}A}\otimes U_{\bar{B}B}$ that corresponds to different decompositions of each part ($A$ and $B$) into subsystems. Finally, tracing out the auxiliary degrees of freedom results in a modified state $\rho^{A'B'}$. The action of the unitaries over the reduced state is equivalent to a local quantum trace-preserving operation (see, e.g., Ref.~\cite{Niel00}). But, performance of arbitrary local channels can convert any CC state into a QC one~\cite{Daki10}. This observation ends the proof. Note, however, that for product states there is not a local operation that correlates both parts, neither quantum not even classically.
    \end{proof}

    \begin{figure}
    \centering
    \includegraphics[width=.77\columnwidth]{./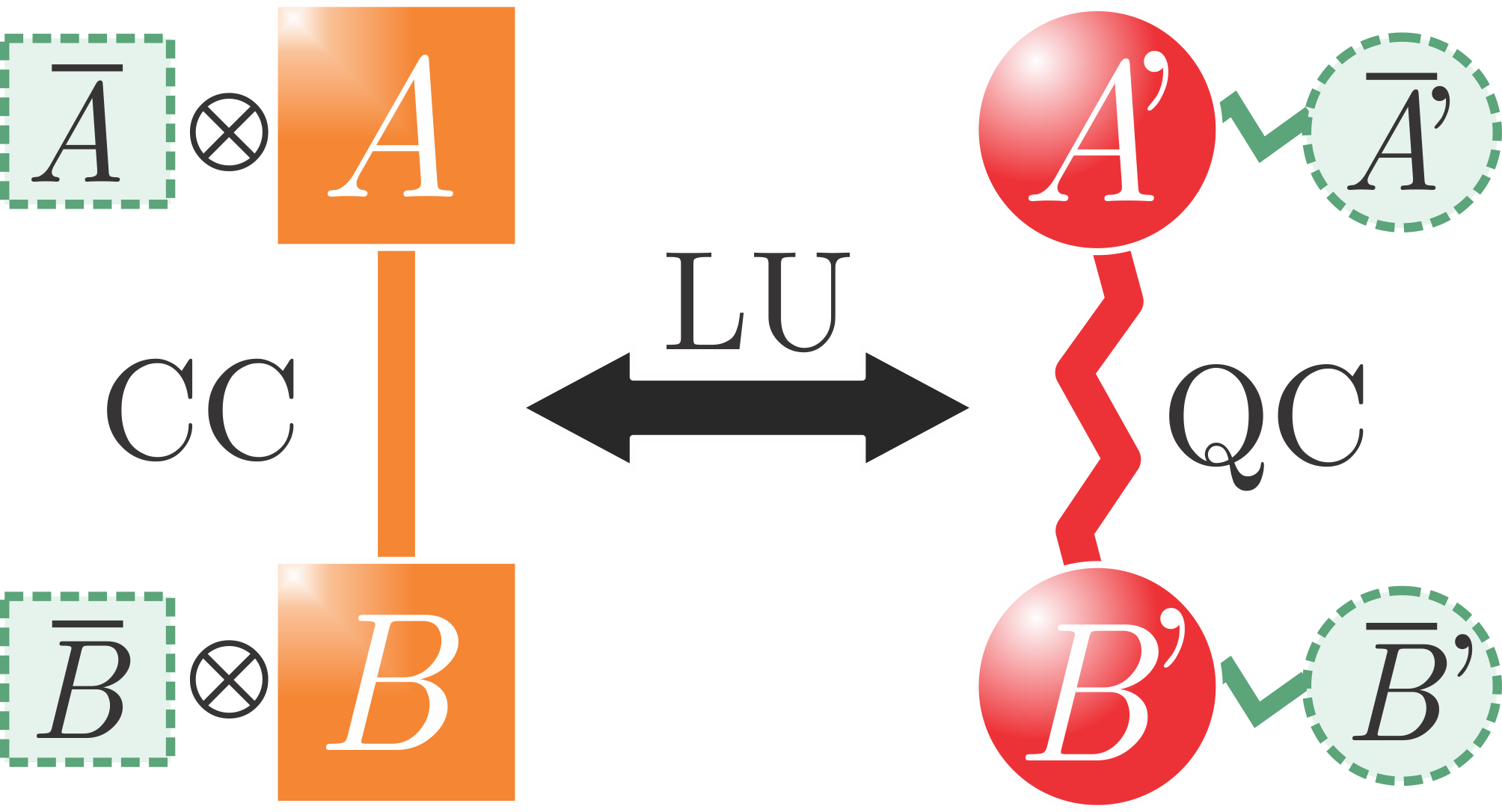}
    \caption{\label{fig:scheme_th1}(Color online) For a composite system $A+B$, CC with respect to that bipartition, coupling auxiliary uncorrelated local systems ($\bar{A}$ and $\bar{B}$) and performing LU operations generally produces a QC state (see Theorem~\ref{theo:qc1} for details). Here, the local unitaries have the form $U^{\bar{A}A}\otimes U^{B\bar{B}}$ and act by rearranging the local degrees of freedom, thus yielding `new' subsystems $A',\bar{A}',B',\bar{B}'$. In other words, the local unitaries induce new decompositions of $\bar{A}A$ and $B\bar{B}$ into different `primed' subsystems.}
    \end{figure}
    
Another way of assessing this important difference (between CC and non-correlated) with regards to the quantum correlations arising from the possible reduction of CC states exists: when the $A$ and $B$ subsystems have non-prime dimensions, it is possible to find reductions of $\rho^{AB}$ that respect the local bipartition $A|B$ and yet possess non-classical correlations. For example, the two-qubits' QC state $\rho^{A_1B_1} = p\ket{0^{A_1}}\bra{0^{A_1}}\otimes\ket{0^{B_1}}\bra{0^{B_1}}+(1-p)\ket{+^{A_1}}\bra{+^{A_1}}\otimes\ket{+^{B_1}}\bra{+^{B_1}}$ (the same as in the previous example) can be regarded as a reduction of the four-qubits' one $\rho^{AB} = p\ket{0^{A_2}}\bra{0^{A_2}}\otimes\ket{0^{A_1}}\bra{0^{A_1}}\otimes\ket{0^{B_1}}\bra{0^{B_1}}\otimes\ket{0^{B_2}}\bra{0^{B_2}} + (1-p)\ket{1^{A_2}}\bra{1^{A_2}}\otimes\ket{+^{A_1}}\bra{+^{A_1}}\otimes\ket{+^{B_1}}\bra{+^{B_1}}\otimes\ket{1^{B_2}}\bra{1^{B_2}}$. The latter is clearly CC with respect to the bipartition $A_1A_2|B_1B_2\cong A|B$. Thus, $\rho^{AB}$ is a CC state with QC reductions, always preserving the same prescription for the local degrees of freedom. Again, this result is very general: \textit{if $\rho^{AB}$ is a bipartite correlated state for which $A$ and $B$ are composites, then it is possible to find a reduction that possess quantum correlations.} We take this observation to be our second theorem.

    \begin{theorem} \label{theo:qc2}
Let $\rho^{AB}$ be a non-product density operator over $\hilb{A}\otimes\hilb{B}$, with $dim\hilb{A}$ and/or $dim\hilb{B}$ given by non-prime numbers. Then, it is possible to find a reduced state that preserves the local bipartition and possess quantum correlations.
    \end{theorem}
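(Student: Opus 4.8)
\emph{Plan.} The plan is to mirror the proof of Theorem~\ref{theo:qc1}, replacing the external ancillas by factors \emph{internal} to $A$ and/or $B$. Assume without loss of generality that $\dim\hilb{A}=d_A$ is non-prime, and write $d_A=d_1 d_2$ with $d_1,d_2\geq 2$, so that $\hilb{A}\cong\hilb{A_1}\otimes\hilb{A_2}$ under a suitable identification. Any such factorization is implemented by a unitary $U_A$ on $\hilb{A}$, and the associated reduction $\Lambda_A(\sigma)=\text{Tr}_{A_2}\!\big(U_A\,\sigma\,\ctr{U_A}\big)$ is a local trace-preserving channel $\hilb{A}\to\hilb{A_1}$. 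The candidate reduced state is $\tilde\rho^{A_1B}=(\Lambda_A\otimes\text{id}_B)(\rho^{AB})$, which manifestly respects the bipartition and requires only one composite side. If $\rho^{AB}$ is already QC the statement is in the spirit of Theorem~\ref{theo:qc1} and I would not dwell on it; the substantive case is $\rho^{AB}$ CC, treated next.

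First I would write the CC state in its product eigenbasis, $\rho^{AB}=\sum_{ij}p_{ij}\ket{i^A}\bra{i^A}\otimes\ket{j^B}\bra{j^B}$. Because $\rho^{AB}$ is non-product, $p_{ij}$ does not factorize, so there exist two values $i=0,1$ (with positive marginal) whose conditional $B$-distributions differ. The key step is to choose the factorization so that the orthogonality distinguishing $\ket{0^A}$ from $\ket{1^A}$ is carried entirely by the $A_2$ factor: take $U_A\ket{0^A}=\ket{0^{A_1}}\otimes\ket{0^{A_2}}$ and $U_A\ket{1^A}=\ket{+^{A_1}}\otimes\ket{1^{A_2}}$. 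This is consistent with unitarity precisely because $d_2\geq 2$ allows the $A_2$-parts to be orthogonal while the $A_1$-parts are \emph{not} (here $\braket{0^{A_1}|+^{A_1}}\neq 0$); one then extends $U_A$ to a full unitary by sending the remaining eigenvectors into the orthogonal complement. This realizes, for an arbitrarily prescribed CC state, the mechanism behind the explicit four-qubit example.

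Tracing out $A_2$ collapses the two selected components to the non-orthogonal pure states $\ket{0^{A_1}}\bra{0^{A_1}}$ and $\ket{+^{A_1}}\bra{+^{A_1}}$, so the conditional states of $A_1$ given the $B$-outcomes no longer share a common eigenbasis. I would then conclude that $\tilde\rho^{A_1B}$ is QC either directly, by noting that non-commuting conditional ensembles preclude a product-basis-diagonal form, or by appealing to the result of Daki\'c \emph{et al.}~\cite{Daki10} already invoked in Theorem~\ref{theo:qc1}, since $\Lambda_A$ is exactly a local channel converting the CC structure into a QC one.

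The main obstacle is this last step. Unlike in Theorem~\ref{theo:qc1}, the channel $\Lambda_A$ is \emph{not} arbitrary: realized by an internal factor of dimension $d_2$, it is constrained to Choi rank at most $d_2$, so one must verify that this restricted class still suffices --- which the explicit factorization above guarantees. A more delicate point is to ensure that $\tilde\rho^{A_1B}$ is genuinely discordant for \emph{every} non-product CC state: non-orthogonality of two conditional components is necessary, but one should rule out accidental cancellations in which all $B$-conditioned marginals of $A_1$ nevertheless commute. I expect these to be non-generic coincidences, avoidable by selecting the two distinguished eigenstates and their non-orthogonal images adaptively within the freedom of $U_A$; making this robust --- ideally by phrasing it through the rank of the correlation matrix in the Daki\'c--Vedral--Brukner witness --- is where the real work lies.
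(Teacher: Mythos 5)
Your plan is essentially the paper's own proof made explicit: the paper likewise regards $\rho^{AB}$ as the ``already extended'' state of Theorem~\ref{theo:qc1} and reduces the claim to the action of a local channel, and your two-row construction is exactly the mechanism of the four-qubit example. You are right that the crux is the restricted channel class, but the restriction is stronger than the Choi-rank bound you note, and the gap you leave open at the end is not a non-generic coincidence fixable by adaptive choices --- it is fatal. Any reduction of the form $\Lambda_A(\sigma)=\text{Tr}_{A_2}\bigl(U_A\sigma\ctr{U_A}\bigr)$ is unital up to scale: $\Lambda_A(\mathds{1}_{A})=\text{Tr}_{A_2}\mathds{1}_{A_1A_2}=d_2\,\mathds{1}_{A_1}$, because here the Stinespring dilation is a square unitary rather than a proper isometry into a larger space. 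This is precisely what separates Theorem~\ref{theo:qc2} from Theorem~\ref{theo:qc1}, where the fresh pure ancilla makes non-unital channels with output dimension $d_A$ available; consequently neither you nor the paper may invoke the ``arbitrary local channel'' result of Ref.~\cite{Daki10} in this setting.

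Scaled unitality produces concrete counterexamples to the strategy --- indeed to the statement as written. On $\hilb{A}\otimes\hilb{B}=\mathbb{C}^4\otimes\mathbb{C}^2$ consider the CC, non-product state
\begin{equation*}
\rho^{AB}=\tfrac14\bigl(\ket{0}\bra{0}+\ket{1}\bra{1}+\ket{2}\bra{2}\bigr)^{A}\otimes\ket{0}\bra{0}^{B}+\tfrac14\,\ket{3}\bra{3}^{A}\otimes\ket{1}\bra{1}^{B},
\end{equation*}
whose blocks $T_0=\tfrac14(\mathds{1}_4-\ket{3}\bra{3})$ and $T_1=\tfrac14\ket{3}\bra{3}$ sum to the maximally mixed marginal $\rho^A=\mathds{1}_4/4$. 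Since $\dim\hilb{B}=2$ is prime, every bipartition-preserving reduction is, up to an irrelevant local unitary on $B$, of the form $\Lambda_A(T_0)\otimes\ket{0}\bra{0}+\Lambda_A(T_1)\otimes\ket{1}\bra{1}$ with $\Lambda_A$ as above (or with the roles of $A_1$ and $A_2$ exchanged), and $\Lambda_A(T_0)+\Lambda_A(T_1)=\tfrac14\,d_2\,\mathds{1}_{A_1}$ forces $[\Lambda_A(T_0),\Lambda_A(T_1)]=0$: the conditional states share an eigenbasis, so every reduction is again CC, for every $U_A$. The same holds for any two-block CC state whose $A$-marginal is proportional to the identity, so the adaptive freedom you appeal to cannot exist, and the ``real work'' you defer cannot be done without adding a hypothesis to the theorem (your construction does go through, e.g., for all non-product rank-two CC states, which is the paper's example). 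Note finally that the paper's own one-line proof glosses over exactly the point you flagged, silently treating the internal factor as if it were the uncorrelated pure ancilla of Theorem~\ref{theo:qc1}; your proposal, by making the constrained channel class explicit, at least exposes where the argument breaks.
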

    
    \begin{proof}
The proof is rather trivial from our first Theorem. If $\rho^{AB}$ has the properties of the statement, we can regard $\rho^{AB}$ as the (already) extended state. Thus, applying local unitaries and tracing out some degrees of freedom yields the desired result. Being correlated is a necessary condition, since the reduction of any product state is trivially non-correlated with respect to a fixed bipartition.
    \end{proof}
    
As pointed out in the Introduction, the relative character of quantum correlations with respect to the chosen partition of a system into subsystems has been carefully studied in the case of pure states~\cite{Zana01,Zana04,Barn04,Hars11}. In our presentation, \textit{we focus on the case of mixed states under a locality restriction: we allow only \textit{local} unitary operations (over the enlarged Hilbert space) to explore the observables' subspaces of each local subsystem.}

Summing up, possession of CC states already implies some degree of quantumness in the correlations of both parts, in the following sense:
    \begin{itemize}
    \item CC states supplied with uncorrelated ancillas exhibit quantum correlations, in general,  when alternative local observables are specified and,
    \item reductions of CC states exhibit, in general, quantum correlations.
    \end{itemize}

    \begin{remark}
Local unitaries act over $\hilb{\bar{A}A}\otimes\hilb{B\bar{B}}$ by rearranging the local degrees of freedom to give an alternative decomposition into subsystems preserving the original local bipartition. For the reduced state $\rho^{AB}$, the transformation is equivalent to a local operation. The impact of local operations on quantum correlations has been seriously studied in the last years. Both properties of quantum correlations, stated in Theorems~\ref{theo:qc1} and~\ref{theo:qc2}, rely on the more general one that quantum correlations can be created by local noise (i.e., local quantum trace-preserving channels)~\cite{Stre11b,Hu11}. When one chooses this channel to be the trace operation, the above relations between classical and quantum correlations of composite systems arises.
    \end{remark}

We propose next a measure that attempts to quantify these facts from an information-theoretic point of view.

\section{Measuring the Potential Quantumness} \label{sec:PQ}
The two Theorems discussed so far refer to closely related facts that can be quantified by consideration of appropriate information-theoretic measures of quantum correlations. Given the previous analysis, we give a straightforward operational definition for our Potential Quantumness (PQ) measure.

    \begin{definition} \label{def:PQ}
Let $\rho^{AB}$ be a density operator over $\hilb{A}\otimes\hilb{B}$, and $\eta_0=\ket{0}\bra{0}$ the `ready' state over $\hilb{C}=\mathbb{C}^d$ of an auxiliary system. The PQ of $\rho^{AB}$, of rank $d$, with respect to the bipartition $A|B$ is
    \begin{equation}
    \PQ(\rho^{AB}) = \max_{U \in \text{LU}} Q^{A|B}(U\eta_0\otimes\rho^{AB}\otimes\eta_0\ctr{U}) \,,
    \end{equation}
where $Q^{A|B}(\rho):=Q(\text{Tr}_{\hilb{C}}\rho)$ implies tracing out the auxiliary systems, and $Q^{A|B}$ is any measure of bipartite quantum correlations between $A$ and $B$.
    \end{definition}
    
Usually, a measure $Q$ of quantum correlations is semi-definite positive, zero for CC states (in particular, for product states) and maximal for maximally entangled states. Those properties are fulfilled by every entanglement and discord-like measures. In those cases, the corresponding PQ-measure satisfies some basic properties that make it suitable as a measure of quantum correlations:
    \begin{itemize}
    \item (Positivity) $\PQ(\rho^{AB})\geq0$ for every state $\rho^{AB}$ and any dimension $d$ of the auxiliary parts.
    \item (Minimum) For any value of $d$, $\PQ(\rho^{AB})=0$ if and only if $\rho^{AB}=\rho^A\otimes\rho^B$.
    \item (Maximum) $\PQ$ is maximal for maximally entangled states.
    \end{itemize}
Positivity holds because $Q$ itself is semi-definite positive. Indeed, from Definition~\ref{def:PQ} follows the stronger relation $\PQ(\rho^{AB})\geq Q(\rho^{AB})$. Regarding the second property, $\PQ(\rho^A\otimes\rho^B)=0$ holds because the unitaries involved do not mix $A\bar{A}$ with $B\bar{B}$ degrees of freedom, and there is no LU that can correlate them, not even in a classical sense. On the other hand, if $\rho^{AB}$ is not a product state, then $\PQ(\rho^{AB})\neq0$%
\footnote{This can be proved by construction of a LU, $U$, such that $Q^{A|B}(U\rho^{ext}U^\dagger)\neq0$. Let us divide the cases of $\rho^{AB}$ being CC and QC. If it is QC, no extension is needed because $\rho^{AB}$ is already such that $Q(\rho^{AB})\neq0$. If, on the contrary, $\rho^{AB}$ is CC, we know that it has the form $\rho^{AB}=\sum_{ij}{p_{ij}\ket{i^A}\bra{i^A}\otimes\ket{j^B}\bra{j^B}}$, for $\{\ket{i^A}\}$ ($\{\ket{j^B}\}$) orthogonal states on $\hilb{A}$ ($\hilb{B}$). Consider the cases when only two of the probabilities are different from zero and, moreover, they involve different local projectors, i.e. $\rho^{AB}=p\ket{0^A}\bra{0^A}\otimes\ket{0^B}\bra{0^B}+(1-p)\ket{1^A}\bra{1^A}\otimes\ket{1^B}\bra{1^B}$. In this case, we can extend the state with ancillas of dimension $2$ on both sides: $\rho^{ext}=\ket{0^{\bar{A}}}\bra{0^{\bar{A}}}\otimes\rho^{AB}\otimes\ket{0^{\bar{B}}}\bra{0^{\bar{B}}}$. Now, $\ket{0^{\bar{A}}}\otimes\ket{0^A}$ and $\ket{0^{\bar{A}}}\otimes\ket{1^A}$ are orthogonal states on $\hilb{\bar{A}}\otimes\hilb{A}$, and can be mapped via a unitary transformation to any other pair $\{\ket{\alpha},\ket{\alpha_\perp}\}$ of orthogonal states. The same happens on $\hilb{B}\otimes\hilb{\bar{B}}$, where a local unitary transformation takes $\ket{0^B}\otimes\ket{0^{\bar{B}}}$ and $\ket{1^B}\otimes\ket{0^{\bar{B}}}$ into $\ket{\beta}$ and $\ket{\beta_\perp}$, respectively. Tracing out the ancillas yields $\rho^{A'B'}=p\rho^{A'}_{\alpha}\otimes\rho^{B'}_{\beta}+(1-p)\rho^{A'}_{\alpha_\perp}\otimes\rho^{B'}_{\beta_\perp}$, which is in general a QC state, unless $\rho^{A'}_{\alpha}$ and $\rho^{A'}_{\alpha_\perp}$ have common eigenbasis (and the same for $B'$).
If it is the case that both probabilities involve a common projector on one of the parties, the same procedure creates a QC state with respect to the other part. Generalization to higher rank states is straightforward.}.
The third property is fulfilled because $Q$ itself saturates for maximally entangled states, even when no extension or local operation is performed.

The defined measure exhibits many other interesting properties but, before presenting them, we prove the following theorem that provides an equivalent definition for $\PQ$, one that does not require any reference to auxiliary systems.

    \begin{theorem} \label{theo:PQ2}
For every state $\rho^{AB}$ and any dimension $d$ as in Definition~\ref{def:PQ}, the PQ of $\rho^{AB}$, of rank $d$, with respect to the bipartition $A|B$ is
    \begin{equation}
    \PQ(\rho^{AB}) = \max_{E\in \text{LO}(d)} Q(E[\rho^{AB}]) \,,
    \end{equation}
where $E\in \text{LO}(d)$ is any local operation of rank at most $d$, and $Q^{A|B}$ is any measure of bipartite quantum correlations between $A$ and $B$.
    \end{theorem}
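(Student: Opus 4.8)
The plan is to establish the equality by showing that the two maximizations range over the same set of output states on $\hilb{A}\otimes\hilb{B}$. Concretely, I would prove that a reduced state of the form $\text{Tr}_{\hilb{C}}[U(\eta_0\otimes\rho^{AB}\otimes\eta_0)\ctr{U}]$, with $U=U^{\bar{A}A}\otimes U^{B\bar{B}}$ and $\eta_0$ a fixed pure state of dimension $d$ on each side, is attainable if and only if it equals $E[\rho^{AB}]$ for some product of local channels $E=E^A\otimes E^B$ whose Kraus rank on each factor is at most $d$. Once the two reachable sets are shown to coincide, the two suprema of the single fixed functional $Q$ must agree, which is exactly the claim.

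For the forward inclusion I would invoke the Stinespring/Kraus representation of channels, as already invoked in the proof of Theorem~\ref{theo:qc1}. Appending the pure ancilla $\eta_0=\ket{0}\bra{0}$ on $\bar{A}$, applying $U^{\bar{A}A}$, and tracing out $\bar{A}$ defines a trace-preserving map $E^A$ on $A$ with Kraus operators $K^A_i=\bra{i^{\bar{A}}}U^{\bar{A}A}\ket{0^{\bar{A}}}$, so that its Kraus rank is bounded by $\dim\hilb{\bar{A}}=d$; the same holds for $E^B$. Because $U$ factorizes across the bipartition, the overall reduced map is precisely the product $E^A\otimes E^B\in\text{LO}(d)$. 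Hence every state counted on the left is counted on the right, giving $\PQ(\rho^{AB})\le\max_{E\in\text{LO}(d)}Q(E[\rho^{AB}])$.

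For the converse, given any $E=E^A\otimes E^B$ with each factor of Kraus rank at most $d$, I would dilate each factor separately. Writing $E^A(\sigma)=\sum_{i=1}^{r_A}K^A_i\sigma\ctr{(K^A_i)}$ with $r_A\le d$ and $\sum_i\ctr{(K^A_i)}K^A_i=\mathds{1}$, the assignment $\ket{0^{\bar{A}}}\otimes\ket{\psi}\mapsto\sum_i\ket{i^{\bar{A}}}\otimes K^A_i\ket{\psi}$ is an isometry from $\ket{0^{\bar{A}}}\otimes\hilb{A}$ into $\hilb{\bar{A}}\otimes\hilb{A}$, which extends to a unitary $U^{\bar{A}A}$ because $\dim\hilb{A}\le\dim(\hilb{\bar{A}}\otimes\hilb{A})$; an identical construction yields $U^{B\bar{B}}$. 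The product $U=U^{\bar{A}A}\otimes U^{B\bar{B}}$ respects the local bipartition and reproduces $E$ after tracing out the ancillas, so $\max_{E\in\text{LO}(d)}Q(E[\rho^{AB}])\le\PQ(\rho^{AB})$. Combining the two inequalities proves the theorem.

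I expect the only delicate point to be the rank accounting: one must verify that the ancilla dimension $d$ matches exactly the admissible Kraus rank on each side, so that neither maximization is strictly larger than the other. The locality (tensor-product) structure, which might at first appear to be an obstruction, is in fact automatic, since the two dilations are built independently on $\bar{A}A$ and $B\bar{B}$ and therefore never couple the two parties.
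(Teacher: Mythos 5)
Your proof is correct and follows essentially the same route as the paper, which simply invokes Stinespring's dilation theorem to identify ancilla-plus-local-unitary-plus-trace with local channels of Kraus rank at most $d$. Your version merely makes both inclusions explicit (the Kraus operators $K^A_i=\bra{i^{\bar{A}}}U^{\bar{A}A}\ket{0^{\bar{A}}}$ for one direction, the isometric dilation extended to a unitary for the other), which is a welcome level of detail but not a different argument.
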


    \begin{proof}
The equivalency is straightforwardly proven remembering that, by Stinespring's dilation theorem~\cite{Stin55}, any quantum operation can be reproduced by adding an ancilla, performing a unitary operation over the enlarged Hilbert space and finally tracing out the ancilla. In our case, the restriction on local unitaries impose the corresponding locality condition over the quantum operations of Theorem~\ref{theo:PQ2}.
    \end{proof}
    
Theorem~\ref{theo:PQ2} offers a new concise interpretation for our measure: $\PQ(\rho)$ quantifies the quantumness of the correlations of $\rho$ attainable by local operations. Among the vast family of local operations, we can identify, for example, the local unitaries, the local unitals and the local classical-quantum channels. As we show below, while the local unitaries do not change the value of $\PQ$, our measure is non-increasing under arbitrary local operations.

    \begin{remark}
The measure $\PQ$ depends on the value of $d$, which can be regarded as a restriction on the class of accessible local operations (LO) or a restriction on the dimension of the accessible auxiliary systems (see Definition~\ref{def:PQ}). As $\text{LO}(d'<d)\subset\text{LO}(d)$, it is straightforward to show that $\PQd{d'<d}(\rho^{AB})\leq\PQ(\rho^{AB})$ for any $\rho^{AB}$. However, it is interesting to consider the following particular $d$-independent scenario. If $\rho^{AB}$ is such that $\max\{dim(\hilb{A}),dim(\hilb{B}))\}=d_{\max}$, then any LO over that state can be implemented with auxiliary extensions of dimensions at most $d_{\max}^2$. Thus, $\PQ(\rho^{AB})\rightarrow\PQd{d_{\max}^2}(\rho^{AB})$ when $d\rightarrow\infty$. We shall refer to the later as the \textit{maximum-PQ} or $\mPQ$. It is clear that it accounts for the no-restriction-over-LO case. Finally, the case $d=0$ trivially matches the corresponding $Q$-measure, i.e. $\PQd{d=0}(\rho)=Q(\rho)$.
    \end{remark}
    
It is noteworthy that this measure, $\PQ$, does not (necessarily) involve a dynamical interpretation. Instead, the unitaries appearing in Definition~\ref{def:PQ} attempt to capture the relative character of the correlations with respect to the partition of a given system into subsystems. That is, if $\rho^{AB}$ is the state of a system (bi)partitioned according to $A|B$, and if there are auxiliary systems such that $\eta_0\otimes\rho^{AB}\otimes\eta_0$ is a possible joint state (as in Definition~\ref{def:PQ}), then the operations $U\in\text{LU}$ can be thought of as a resetting of the local subsystems (see Appendix~\ref{sec:PQobservables} for a more detailed discussion).

\subsection{Properties of the PQ-measure}
As expected, $\PQ$ inherits some particular properties of the chosen $Q$ measure. For example, if $Q$ is the usual QD then $\PQ$ is an asymmetric measure relying on one-partite measurements, while becoming symmetric if $Q$ is the symmetric QD. But, even before specializing things for a certain $Q$, we can prove further properties of the PQ-measure.

    \begin{theorem} \label{theo:pures}
For pure states $\rho^{AB}=\ket{\psi^{AB}}\bra{\psi^{AB}}$ and for any measure $Q$ that matches an entanglement monotone for pure states, the PQ-measure coincides with the corresponding entanglement monotone.
    \end{theorem}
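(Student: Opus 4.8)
The plan is to establish the equality by a two-sided bound, writing $E$ for the entanglement monotone that $Q$ reproduces on pure states, so that the target becomes $\PQ(\rho^{AB})=E(\rho^{AB})$ for pure $\rho^{AB}=\ket{\psi^{AB}}\bra{\psi^{AB}}$.

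The lower bound is immediate. The identity channel is an admissible local operation, so Theorem~\ref{theo:PQ2} (equivalently, the stronger relation $\PQ(\rho^{AB})\geq Q(\rho^{AB})$ noted after Definition~\ref{def:PQ}) already gives $\PQ(\rho^{AB})\geq Q(\rho^{AB})$. Since the input is pure and $Q$ agrees with $E$ there, this reads $\PQ(\rho^{AB})\geq E(\rho^{AB})$.

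The work is in the upper bound $\PQ(\rho^{AB})\leq E(\rho^{AB})$. First I would pass to the purification picture of Definition~\ref{def:PQ}: the extended state $\ket{0}\otimes\ket{\psi^{AB}}\otimes\ket{0}$ is pure, so after any $U=U^{\bar A A}\otimes U^{B\bar B}$ the global state remains pure on $\mathcal{A}\mathcal{B}$, with $\mathcal{A}=\bar A A$ and $\mathcal{B}=B\bar B$. Because $U$ factorizes across the cut $\mathcal{A}|\mathcal{B}$ and the ancillas start in a product state, the bipartite entanglement across $\mathcal{A}|\mathcal{B}$ is left invariant and equals $E(\ket{\psi^{AB}})$. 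The reduced state $\rho^{A'B'}$ is the marginal obtained by tracing out $\bar A',\bar B'$, i.e.\ by applying local partial traces on the $\mathcal{A}$ and $\mathcal{B}$ sides; since an entanglement monotone is non-increasing under such local operations, the $A'|B'$ entanglement of $\rho^{A'B'}$ is at most $E(\ket{\psi^{AB}})$.

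The hard part is to convert this statement about entanglement into one about $Q$: the output $\rho^{A'B'}$ is generically mixed, and on mixed states $Q$ need not coincide with $E$ --- indeed this is precisely why the theorem is restricted to pure inputs (for $Q=E$ on all states the claim would hold for every $\rho^{AB}$, not only pure ones). I would close this gap by arguing that the maximization is effectively saturated on purity-preserving operations: for a pure input the correlations captured by $Q$ are maximized by local unitaries, equivalently by local isometries that keep the global state pure, for which $\rho^{A'B'}$ stays pure and $Q(\rho^{A'B'})=E(\rho^{A'B'})\leq E(\ket{\psi^{AB}})$, while any genuine contraction that mixes the output only degrades the available entanglement resource. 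Combining the two bounds yields $\PQ(\rho^{AB})=E(\rho^{AB})$. I expect the technical crux to be exactly this last reduction --- showing that trading purity for a non-unitary local operation never pushes $Q$ above the entanglement of the pure input --- and the two ingredients to be combined there are the monotonicity of $E$ under local operations and the defining property $Q|_{\mathrm{pure}}=E$.
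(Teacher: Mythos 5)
Your lower bound ($\PQ\geq Q=E$ on pure states via the identity map) is fine and is exactly the saturation step in the paper. The problem is your upper bound, and the gap there is genuine. What you actually establish via the purification picture is a bound on the \emph{entanglement} of the output, $E(\rho^{A'B'})\leq E(\ket{\psi^{AB}})$; what you need is a bound on $Q(\rho^{A'B'})$, and since $\rho^{A'B'}$ is generically mixed, the hypothesis $Q|_{\mathrm{pure}}=E$ says nothing about it. Your attempt to bridge this --- ``the maximization is effectively saturated on purity-preserving operations'' --- is an assertion, not an argument, and it is precisely the statement to be proven. Worse, the supporting sub-claim is false as written: applying $U^{\bar AA}\otimes U^{B\bar B}$ to $\ket{0}\otimes\ket{\psi^{AB}}\otimes\ket{0}$ keeps the \emph{global} state pure, but after tracing out the ancillas the marginal $\rho^{A'B'}$ is in general mixed (the ancillas become entangled with the primed subsystems), so ``local unitaries keep $\rho^{A'B'}$ pure'' does not hold. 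And for a discord-like $Q$, the premise that non-unitary local operations ``only degrade'' quantumness is exactly what this paper shows fails for mixed states, so it cannot be invoked for free even with pure inputs.

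The paper's proof takes a shorter and different route that sidesteps your difficulty by reading the hypothesis more strongly: it takes ``$Q$ matches an entanglement monotone for pure states'' to include the operational property that, \emph{for pure input states}, $Q$ is non-increasing under LOCC, i.e.\ $Q(E'[\ket{\psi^{AB}}])\leq Q(\ket{\psi^{AB}})$ for every $E'\in\mathrm{LOCC}$ (regardless of whether the output is mixed). Since $\mathrm{LO}(d)\subset\mathrm{LOCC}$, Theorem~\ref{theo:PQ2} then gives
\begin{equation*}
\PQ(\ket{\psi^{AB}})=\max_{E\in\mathrm{LO}(d)}Q(E[\ket{\psi^{AB}}])\leq\max_{E'\in\mathrm{LOCC}}Q(E'[\ket{\psi^{AB}}])\leq Q(\ket{\psi^{AB}}) \,,
\end{equation*}
and the identity map saturates the chain. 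In other words, the monotonicity of $Q$ on pure inputs is part of the hypothesis, not something derived from monotonicity of $E$ plus agreement on pure states --- which, as you yourself correctly diagnosed, is insufficient. If you want to keep your weaker hypothesis, you must supply the missing lemma (that $Q$ of the mixed outputs is controlled by the entanglement of the pure input), and your proposal does not do so.
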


    \begin{proof}
The proof is very simple. By Theorem~\ref{theo:PQ2}, $\PQ(\ket{\psi^{AB}})$ is the maximum value of $Q(E[\ket{\psi^{AB}}])$ over $E\in\text{LO}(d)$. If, i) for pure states,
$Q$ decreases monotonically under local operations and classical communication (LOCC), and ii) remembering that $\text{LO}\subset\text{LOCC}$, we have
    \begin{align*}
    \PQ(\ket{\psi^{AB}}) &= \max_{E\in\text{LO}(d)} Q(E[\ket{\psi^{AB}}]) \\
        &\leq \max_{E'\in\text{LOCC}} Q(E'[\ket{\psi^{AB}}]) \\
        &\leq Q(\ket{\psi^{AB}}) \,,
    \end{align*}
On the other hand, taking $E$ to be the identity map, we have that $Q(E[\ket{\psi^{AB}}])=Q(\ket{\psi^{AB}})$, saturating the above inequality. Accordingly, $\PQ(\ket{\psi^{AB}})=Q(\ket{\psi^{AB}})$ for every pure state $\ket{\psi^{AB}}$ and any value of $d$.
    \end{proof}
    
Next, we show that performing a LO over a certain state cannot increase the PQ unless the rank of the LO is greater than the one corresponding to the PQ-measure. The result follows from the optimization over LO in the definition of the PQ-measure.

    \begin{theorem} \label{theo:local_ops}
The PQ-measure of rank $d$ is non-increasing under a LO of rank equal or lower than $d$.
    \end{theorem}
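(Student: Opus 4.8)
The plan is to reduce the statement to the operational characterization of the measure supplied by Theorem~\ref{theo:PQ2} and then to exploit that local operations compose to local operations. Let $F$ denote the local operation of rank at most $d$ that is applied, and set $\sigma=F[\rho^{AB}]$; the goal is to show $\PQ(\sigma)\le\PQ(\rho^{AB})$.

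First I would invoke Theorem~\ref{theo:PQ2} to write $\PQ(\sigma)=\max_{E\in\text{LO}(d)}Q(E[\sigma])$, and then note the elementary identity $E[\sigma]=E[F[\rho^{AB}]]=(E\circ F)[\rho^{AB}]$. Thus the optimization defining $\PQ(\sigma)$ effectively ranges over the states $(E\circ F)[\rho^{AB}]$ with $E\in\text{LO}(d)$. Second, I would check that $E\circ F$ is again a legitimate local operation preserving the bipartition $A|B$: since $E$ and $F$ each factor across $A|B$ (acting only within $A\bar A$ and $B\bar B$, as in Definition~\ref{def:PQ}), their composition retains this tensor-product structure and is therefore local. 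Granting that $E\circ F$ also lies in the admissible rank class, every state entering the maximization for $\PQ(\sigma)$ is of the form $G[\rho^{AB}]$ with $G=E\circ F$ in that class, so it is dominated by the maximization defining $\PQ(\rho^{AB})$, yielding $\PQ(\sigma)\le\PQ(\rho^{AB})$.

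The delicate point — and the step I expect to be the main obstacle — is the rank bookkeeping of the composition $E\circ F$. A priori the Kraus rank of a composite channel is bounded only by the product of the factors' ranks, so $E\circ F$ could have rank up to $d^2$ and would then sit in $\text{LO}(d^2)$ rather than $\text{LO}(d)$, delivering only the weaker bound $\PQ(\sigma)\le\PQd{d^2}(\rho^{AB})$. The hypothesis $\text{rank}(F)\le d$ is precisely what must be leveraged to keep the composite within the rank-$d$ budget already present in the definition of $\PQ(\rho^{AB})$.

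The cleanest instance is the unrestricted measure $\mPQ$, where the admissible class is closed under composition (every channel on the fixed finite-dimensional system has rank at most $d_{\max}^2$, as in the remark above) and the monotonicity is then immediate without any further accounting. Making the finite-$d$ argument fully rigorous — establishing that composing with a rank-$\le d$ map does not grant access to genuinely new output states of higher $Q$ than those reachable at rank $d$, rather than settling for the crude product bound — is where I would concentrate the effort, and it is the one place where the bare phrase ``follows from the optimization over $\text{LO}$'' needs to be backed by a precise rank estimate.
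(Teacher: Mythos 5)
Your argument follows exactly the same route as the paper's own proof: invoke Theorem~\ref{theo:PQ2}, rewrite $E[F[\rho^{AB}]]=(E\circ F)[\rho^{AB}]$, and dominate the maximum over compositions by the maximum over all of $\text{LO}(d)$. The one step you single out as delicate---whether $E\circ F$ stays inside the rank-$d$ class---is precisely the step the paper dispatches in a single unproven clause, namely that ``operations of the form $E\circ E'$ span a subset of $\text{LO}(d)$.'' Your worry is legitimate: the Kraus rank of a composition is in general bounded only by the product of the factors' ranks, so $E\circ F$ may have rank up to $dd'$, and $\text{LO}(d)$ is closed under composition only when the rank constraint is vacuous, i.e.\ when $d\geq d_{\max}^2$ so that $\text{LO}(d)$ contains every local channel (the $\mPQ$ regime). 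The paper supplies no rank estimate that you are missing; for strictly smaller $d$ its proof contains the very gap you isolate, so your proposal is not weaker than the paper's argument---it is the same argument with the gap stated rather than asserted away. If one wants a fully rigorous finite-$d$ statement along these lines, the natural options are either to weaken the conclusion to $\PQ(F[\rho^{AB}])\leq\PQd{dd'}(\rho^{AB})$, which follows immediately from your composition step together with the monotonicity of $\PQd{d}$ in $d$ noted in the paper's Remark, or to restrict attention to $\mPQ$, which is in fact the setting in which the paper subsequently uses this theorem (its property b) for $\mPQ$).
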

    
    \begin{proof}
Let $E'\in\text{LO}(d')$ and $E'[\rho]=\rho'$ be the corresponding transformations of $\rho$, with $d'\leq d$. Then, for $\rho'$ it holds that     \begin{align*}
    \PQ(\rho') &= \max_{E\in\text{LO}(d)} Q(E[\rho']) \\
        &= \max_{E\in\text{LO}(d)} Q(E\circ E'[\rho]) \\
        &\leq \max_{E\in\text{LO}(d)} Q(E[\rho]) \\
        &= \PQ(\rho) \,,
    \end{align*}
where `$\circ$' indicates composition of operations. We used, in the third line, the fact that operations of the form $E\circ E'$ span a subset of $\text{LO}(d)$.
    \end{proof}
    
Of particular interest is the behaviour of any measure under LU operations. In this case, it is easy to show that the PQ-measure remains invariant.
    
    \begin{theorem} \label{theo:unitaries}
For any measure $Q$ of bipartite quantum correlations that is invariant under LU operations, the associated PQ-measure is also invariant under local unitaries.
    \end{theorem}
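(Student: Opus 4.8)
The plan is to reduce the claim to the monotonicity already established in Theorem~\ref{theo:local_ops}, combined with the invertibility of unitary channels. Writing $V=V^A\otimes V^B$ for the local unitary and $\mathcal{V}[\cdot]=V(\cdot)\ctr{V}$ for the associated channel, the first observation I would record is that $\mathcal{V}$ is itself a local operation of Kraus rank one, hence $\mathcal{V}\in\text{LO}(d)$ for every $d\geq1$. I would then work from the equivalent characterization of Theorem~\ref{theo:PQ2}, so that $\PQ(\mathcal{V}[\rho])=\max_{E\in\text{LO}(d)}Q(E[\mathcal{V}[\rho]])=\max_{E\in\text{LO}(d)}Q(E\circ\mathcal{V}[\rho])$, where `$\circ$' denotes composition as in the proof of Theorem~\ref{theo:local_ops}.

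The key step is to observe that the assignment $E\mapsto E\circ\mathcal{V}$ maps $\text{LO}(d)$ into itself: composing a local channel of rank at most $d$ with a rank-one local unitary channel again yields a local channel of Kraus rank at most $d$. Since $\mathcal{V}$ is invertible, with inverse $\mathcal{V}^{-1}[\cdot]=\ctr{V}(\cdot)V$ once more a local unitary channel of rank one, the same reasoning applied to $\mathcal{V}^{-1}$ shows that $E\mapsto E\circ\mathcal{V}$ is in fact a bijection of $\text{LO}(d)$ onto itself, with inverse $E'\mapsto E'\circ\mathcal{V}^{-1}$. Relabelling the optimization variable $E'=E\circ\mathcal{V}$ then gives $\max_{E\in\text{LO}(d)}Q(E\circ\mathcal{V}[\rho])=\max_{E'\in\text{LO}(d)}Q(E'[\rho])=\PQ(\rho)$, which is the desired identity.

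A more transparent alternative, which I would at least mention, is to deduce the two inequalities directly from Theorem~\ref{theo:local_ops}: because $\mathcal{V}\in\text{LO}(d)$ we immediately get $\PQ(\mathcal{V}[\rho])\leq\PQ(\rho)$, and because $\mathcal{V}^{-1}\in\text{LO}(d)$ as well, applying the same theorem to the state $\mathcal{V}[\rho]$ yields $\PQ(\rho)=\PQ(\mathcal{V}^{-1}[\mathcal{V}[\rho]])\leq\PQ(\mathcal{V}[\rho])$; the two bounds combine to the claimed equality.

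The main obstacle underlying either route is the single genuine verification that composition with a local unitary channel does not push the operation out of the rank-$d$ local class, i.e. that $E\circ\mathcal{V}$ and $E\circ\mathcal{V}^{-1}$ remain in $\text{LO}(d)$; this is precisely where the rank-one and invertible character of $\mathcal{V}$ is used. It is worth remarking that neither argument actually appeals to the assumed LU-invariance of $Q$: the bijection of $\text{LO}(d)$ absorbs the unitary \emph{before} $Q$ is evaluated, so the conclusion in fact holds for the $\PQ$-measure built from any admissible $Q$.
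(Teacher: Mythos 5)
Your proof is correct, and it rests on the same underlying mechanism as the paper's proof---absorbing the invertible unitary into the optimization variable---but it is run in a different formulation. The paper argues directly from Definition~\ref{def:PQ}: writing $\rho_1=U^{A|B}\rho\,\ctr{(U^{A|B})}$, it notes that $\eta_0\otimes\rho_1\otimes\eta_0$ is obtained from $\eta_0\otimes\rho\otimes\eta_0$ by the unitary $\mathds{1}\otimes U^{A|B}\otimes\mathds{1}$, which is still local with respect to $\bar{A}A|B\bar{B}$, so the substitution $V\mapsto V'=V(\mathds{1}\otimes U^{A|B}\otimes\mathds{1})$ is a bijection of the extended-space local unitaries and the maximum is unchanged. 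You instead work one level up, at the channel level of Theorem~\ref{theo:PQ2}, and show that $E\mapsto E\circ\mathcal{V}$ is a bijection of $\text{LO}(d)$ onto itself; the single substantive check in either version is the closure step (for you, that composing with a Kraus-rank-one local unitary channel keeps $E$ local and of rank at most $d$; for the paper, that $VU^{A|B}$ is still local on $\bar{A}A|B\bar{B}$), and you carry it out correctly. Your alternative route---applying Theorem~\ref{theo:local_ops} twice, once to $\mathcal{V}$ and once to $\mathcal{V}^{-1}$---is the most economical of the three and is not the one the paper takes, even though Theorem~\ref{theo:local_ops} was available to it; it buys brevity at the price of hiding the explicit reparametrization. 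Your closing observation is also accurate, and in fact applies equally to the paper's own proof: the LU-invariance of $Q$ assumed in the statement is never invoked, since the unitary is absorbed into the optimization before $Q$ is evaluated (the hypothesis only becomes relevant in the degenerate $d=0$ reading, where $\PQd{d=0}$ is identified with $Q$ itself).
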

    
    \begin{proof}
Let $U^{A|B}$ be a unitary operation acting locally over $\hilb{A}\otimes\hilb{B}$. For any state $\rho$, the corresponding transformation is $\rho\mapsto\rho_1=U^{A|B}\rho U^{A|B}$. We want to compare the PQ of $\rho$ with that of $\rho_1$. From Definition~\ref{def:PQ},
    \begin{align*}
    \PQ(\rho_1) &= \max_{V\in LU} Q^{A|B}(V\eta_0\otimes\rho_1\otimes\eta_0\ctr{V}) \\
     &= \max_{V\in LU} Q^{A|B}(V U^{A|B}\eta_0\otimes\rho\otimes\eta_0 U^{A|B}\ctr{V}) \\
     &= \max_{V'\in LU} Q^{A|B}(V'\eta_0\otimes\rho_1\otimes\eta_0\ctr{V'}) \,,
    \end{align*}
which is equal to $\PQ(\rho)$. In the third line, we use that the composition of a unitary operation $V$ that is local over $\bar{A}A|B\bar{B}$ and another unitary operation $U^{A|B}$ that is local on $A|B$, yields a unitary $V'$ that is local over $\bar{A}A|B\bar{B}$.
    \end{proof}

As stated in Theorem~\ref{theo:qc2}, one would expect the PQ-measure to capture the quantum correlations from the possible reductions of a certain state. The following result shows that, indeed, such is the case.

    \begin{theorem} \label{theo:reductions}
Given a bipartite state $\rho^{AB}$ where $A$ and $B$ are also composites, $A=\{A_i\}$ and $B=\{B_j\}$, the PQ-measure of rank $d$ is lower bounded by the Q-measure over all the possible (fine-grained) reductions $\rho^{A_iB_j}=\text{Tr}_{\hilb{comp}}\rho^{AB}$, with $\hilb{comp}=\bigotimes_{m\neq i,n\neq j}\hilb{A_m}\otimes\hilb{B_n}$, such that $dim(\hilb{A_i})\times dim(\hilb{B_j}) \leq d$.
    \end{theorem}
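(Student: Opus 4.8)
The plan is to derive the bound directly from the ancilla-free characterization supplied by Theorem~\ref{theo:PQ2}, namely $\PQ(\rho^{AB}) = \max_{E\in\text{LO}(d)} Q(E[\rho^{AB}])$. The entire argument then reduces to exhibiting, for each admissible index pair $(i,j)$, a single local operation of rank at most $d$ whose action on $\rho^{AB}$ returns the fine-grained reduction $\rho^{A_iB_j}$; since that operation is one particular competitor in the maximization defining $\PQ$, the optimum can only exceed the value $Q(\rho^{A_iB_j})$.

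First I would fix $i$ and $j$ and introduce the candidate map $E_{ij}[\,\cdot\,] := \text{Tr}_{\hilb{comp}}(\,\cdot\,)$, with $\hilb{comp}=\bigotimes_{m\neq i}\hilb{A_m}\otimes\bigotimes_{n\neq j}\hilb{B_n}$, and check the two properties it needs in order to qualify as an element of $\text{LO}(d)$. Locality is immediate: because the traced-out space factorizes across the bipartition as $\hilb{comp}=\left(\bigotimes_{m\neq i}\hilb{A_m}\right)\otimes\left(\bigotimes_{n\neq j}\hilb{B_n}\right)$, the map splits into a partial trace acting only on Alice's surplus factors tensored with one acting only on Bob's, $E_{ij}=\text{Tr}_{A_{\neq i}}\otimes\,\text{Tr}_{B_{\neq j}}$. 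Each factor is completely positive and trace preserving on a single party, so $E_{ij}$ respects the local bipartition $A|B$ and is a bona fide local operation.

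Next I would account for the rank. The reduction $E_{ij}$ maps $\rho^{AB}$ onto a state supported on $\hilb{A_i}\otimes\hilb{B_j}$, whose total output dimension is exactly $dim(\hilb{A_i})\times dim(\hilb{B_j})$; by hypothesis this does not exceed $d$, so $E_{ij}\in\text{LO}(d)$. The one point that genuinely deserves care is reconciling this dimension-reducing map with the ancilla picture of Definition~\ref{def:PQ}, where the operation nominally returns a state on a space isomorphic to $\hilb{A}\otimes\hilb{B}$. I would close this gap by padding the output with fixed local \emph{ready} states and invoking the (standard, and here assumed) invariance of any reasonable $Q$ under appending uncorrelated local ancillas, $Q(\rho^{A_iB_j}\otimes\sigma^A_0\otimes\sigma^B_0)=Q(\rho^{A_iB_j})$, so that the $Q$-value read off the full-dimensional padded output coincides with $Q(\rho^{A_iB_j})$. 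This is precisely the bookkeeping already legitimized by Stinespring's dilation in Theorem~\ref{theo:PQ2}, and I expect it to be the only delicate step.

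Finally I would insert $E_{ij}$ into the maximization. Since $E_{ij}\in\text{LO}(d)$ for every admissible pair, Theorem~\ref{theo:PQ2} gives $\PQ(\rho^{AB})\geq Q(E_{ij}[\rho^{AB}])=Q(\rho^{A_iB_j})$ for each such $(i,j)$, and taking the maximum over all admissible reductions yields the stated lower bound $\PQ(\rho^{AB})\geq\max_{i,j}Q(\rho^{A_iB_j})$. The main obstacle is therefore not the optimization itself but the careful identification of the partial trace as a rank-$d$ local operation with the correct output dimension; once that is settled, the conclusion follows from the same monotonicity-type reasoning already used to establish the earlier properties of $\PQ$.
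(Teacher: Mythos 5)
Your proof is correct and takes essentially the same route as the paper's: the partial trace over the complementary factors is itself a local operation of rank at most $d$, hence one competitor in the maximization of Theorem~\ref{theo:PQ2}, which immediately gives $\PQ(\rho^{AB})\geq Q(\rho^{A_iB_j})$. Your write-up is in fact more careful than the paper's one-sentence argument, since you explicitly verify the locality factorization, the output-dimension bound, and the padding needed to reconcile a dimension-reducing map with the ancilla picture of Definition~\ref{def:PQ}.
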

    
    \begin{proof}
Any reduction $\rho^{A_iB_j}$ is the result of a LO of the form $\text{Tr}_{\hilb{A_i}}\circ\text{Tr}_{\hilb{B_j}}\circ\mathds{1}_{\hilb{comp}}$ over the state of the full system. Such an LO is thus included amongst those considered in the maximization-procedure involved in the definition of $\PQ$, if $dim(\hilb{A_i})\times dim(\hilb{B_j}) \leq d$.
    \end{proof}
    
As stated before, Theorem~\ref{theo:reductions} establishes that the PQ-measure takes into account the second fact mentioned in Section~\ref{sec:PQ}, namely that even CC states can have QC reductions, a phenomenon that is NOT captured by discord-like measures. For a given CC state, however, any reduction is a separable state for the same bipartition~\cite{Li08,Bell14,Bell15,Plas15}. As a consequence, the PQ of a given CC state is upper bounded by that of the separable states. From now on, we are going to concentrate efforts on the case of non-restricted local capabilities, for which the situation is well described by $\mPQ$. In that case, the theorems above asserts that:
    \begin{enumerate}
    \item[a)] $\mPQ$ matches an entanglement monotone for pure states and usual $Q$ measures,
    \item[b)] $\mPQ$ is non-increasing under LO,
    \item[c)] $\mPQ$ is invariant under LU,
    \item[d)] $\mPQ$ takes into account the quantum correlations of every possible reduction of the state.
    \end{enumerate}

\subsection{Relation to other measures of quantumness}
Some quantum correlations' measures involving local unitary operations have recently appeared in the Literature. They can be related, to some extent, to our above proposal.

Let us start by considering the interesting measure of quantumness advanced by Devi and Rajagopal (DR)~\cite{Devi08}. Given a bipartite state $\rho^{AB}$, they consider i) all the possible extensions $\rho^{\bar{A}AB}$ to a larger Hilbert space, such that $\text{Tr}_{\bar{A}}\rho^{\bar{A}AB}=\rho^{AB}$, and ii) the set of projective measurements over $\bar{A}A$. Hence, quantumness is defined as the minimum Kullback-Liebler relative entropy between the original state and the post-measurement state. As in the Potential discord (PD)-case, this measure involves an enlarged Hilbert space. However, since DR's measure is computed via a minimization, this quantity is expected to be lower than, for example, QD. Indeed, the authors have shown that their measure is an upper bound to the relative entropy of entanglement.

As a second example, we refer again to the work of Daki\'c \textit{et al.}~\cite{Daki10}, where the authors show that the rank of the correlation matrix of a bipartite state serves as a witness of quantum correlations. Any bipartite state can be written in terms of arbitrary bases $\{A_i\}$ and $\{B_j\}$ of Hermitian operators of the local Hilbert spaces, $\hilb{A}$ and $\hilb{B}$, as $\rho^{AB}=\sum_{ij}{r_{ij}A_i\otimes B_j}$. The number of non-zero singular values of the matrix $(r_{ij})$ is $L\leq d_{\min}^2$, with $d_{\min}=\min\{dim(\hilb{A}),dim(\hilb{B})\}$. For CC states, $L\leq d_{\min}$. Hence, $L\geq d_{\min}$ implies quantum correlations. Nonetheless, there are states with non-zero QD and $L<d_{\min}$. In particular, any separable state that can be created by local operations on CC states has $L<d_{\min}$. In our treatment, this implies that any CC state displays the same degree of quantumness that the most QC separable state that can be locally created from it (the latter has the same $L$ that the original CC state).

Finally, we look at another related work, due to Gharibian~\cite{Ghar12}, who defines a measure of nonclassicality as the minimal distance between the state and all its possible local unitary transformations. From our perspective, the LU operations accounts for a switch in local observables. Thus, Gharibian's measure captures the minimal disturbance suffered by a given state when changing the local observables. It turns out that Gharibian's measure is also a discord-like one, that is non-zero if and only if the state is not a CC state.
\vskip 4mm
\noindent None of the above measures captures what our measure of Potential Quantumness does, namely the non-classical correlations present in CC states. Next, so as to be able present some numerical results and obtain deeper insight into these matters, we specialize things by regarding QD as our quantum correlations measure.

\section{Quantum Potential Discord (PD)} \label{sec:PD}
Definition~\eqref{def:PQ} determines a family of correlation measures that depends on the particular functional $Q:\mathcal{L(H)}\rightarrow\mathbb{R}$ that one chooses to quantify the quantum correlations. For example, we can take $Q$ equal to the usual QD~\cite{Olli01,*Hend01}
    \begin{equation} \label{eq:discord}
    \delta(\rho) := \mathcal{I}(\rho)-\max_{\Pi}\mathcal{I}(\Pi[\rho]) \,,
    \end{equation}
with $\mathcal{I}$ the quantum mutual information and $\Pi[\rho]$ the post-local-measurement state. The measure $\delta$ attempts to capture the minimal disturbance suffered by the state under a local non-selective measurement. Also, QD is an essential resource in the performance of many quantum protocols~\cite{Cava11,Madh11,Fanc11,Stre11a,Daki12}.

Thus, Potential Discord (PD) should be defined as
    \begin{equation}
    \PD(\rho) := \max_{E\in\text{LO}} \delta(E[\rho]) \,,
    \end{equation}
where $\delta(\rho)$ is the usual QD given by Eq.~\eqref{eq:discord}. We are going to consider local measurements over $A$, i.e., $\Pi[\rho]=\sum_i{(\Pi^A_i\otimes\mathds{1}^B)\rho(\Pi^A_i\otimes\mathds{1}^B)}$ for some local projective measurement $\{\Pi^A_i\}$. Analogous results can be found using bi-local measurements, or considering generalized (instead of projective) measurements. From its definition, it is straightforward to observe that PD is an intermediate measure between QD and mutual information: $\delta\leq\PD\leq\mathcal{I}$. In particular, for pure states, both QD and PD collapse to the Entropy of Entanglement and one has $S(\rho^A)=\delta=\PD\leq\mathcal{I}=2S(\rho^A)$, with $S(\rho^A)=S(\rho^B)$ the von Neumann entropy of the reduced state. On the other hand, for CC states, QD vanishes and one has $0\leq\PD\leq\mathcal{I}$. Finally, all three measures are zero for product states.

In order to understand the role played by the maximization procedure involved in a PD-computation, consider the one-parameter family of fully CC states $\rho^C_{\eta}=(1-\eta)\ket{00}\bra{00}+\eta\ket{11}\bra{11}$, with $0\leq\eta\leq1$. (Hereon, in order to simplify notation, we use $\ket{ij}$ instead of $\ket{i^A}\otimes\ket{j^B}$.) For any value of $\eta$, the state is a mixture of product and mutually orthogonal (i.e. fully distinguishable) states. Thus, $Q(\rho^C_{\eta})=0$ for any known measure of quantum correlations $Q$. However, for any $\eta\not\in\{0,1\}$, a local operation will create quantum correlations with respect to the bipartition. As seen in Fig.~\ref{fig:potdis_class}, PD captures this idea, distinguishing the quantumness of the different members of the $\rho^C_{\eta}$-family depending on the amount of quantum correlations that can be created under a LO. Highly symmetric families given by isotropic and Werner states has the same amount of PD than QD. We parametrize isotropic states as $\rho^I_{\eta}=(1-\eta)(\mathds{1}/3)+\eta\ket{\beta}\bra{\beta}$, with $\ket{\beta}$ a Bell-type state, and Werner states are given by $\rho^W_{\eta}=(\eta/3)P_++(1-\eta)P_-$, with $P_\pm=(\mathds{1}\pm\mathbb{P})/2$ and $\mathbb{P}=\sum_{ij}{\ket{ij}\bra{ji}}$.

    \begin{figure}
    \centering
    \includegraphics[width=.99\columnwidth]{./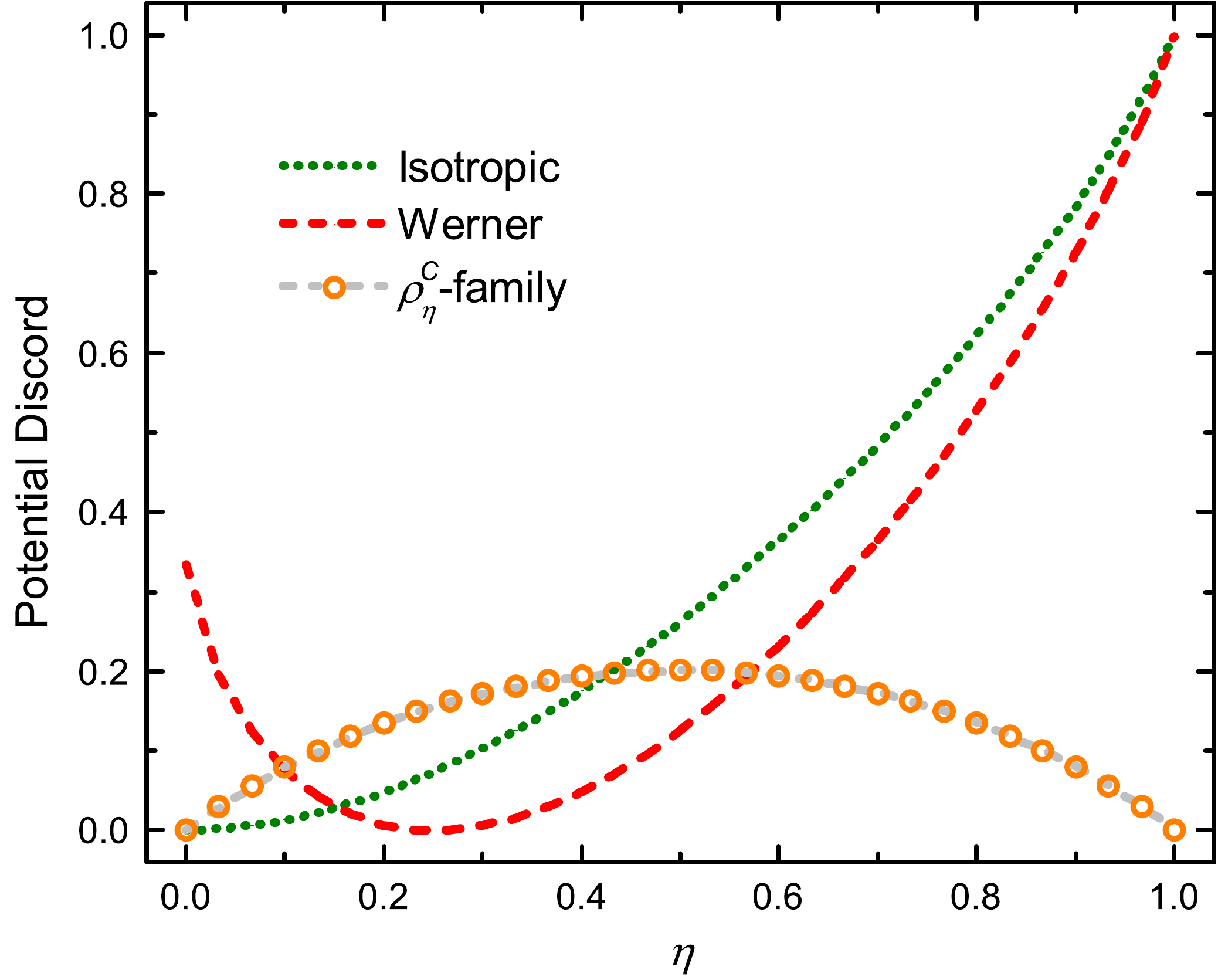}
    \caption{\label{fig:potdis_class}(Color online) Classically-correlated states can exhibit some quantumness revealed by PD. That is the case for the $\rho^C_\eta$-family of fully CC states (orange circles), which attains maximal PD when $\eta=1/2$. Isotropic (green dotted line) and Werner (red dashed line) families have a PD that equals their QD, as no LO can increase quantum correlations for them.}
    \end{figure}

    \begin{figure}
    \centering
    \includegraphics[width=.99\columnwidth]{./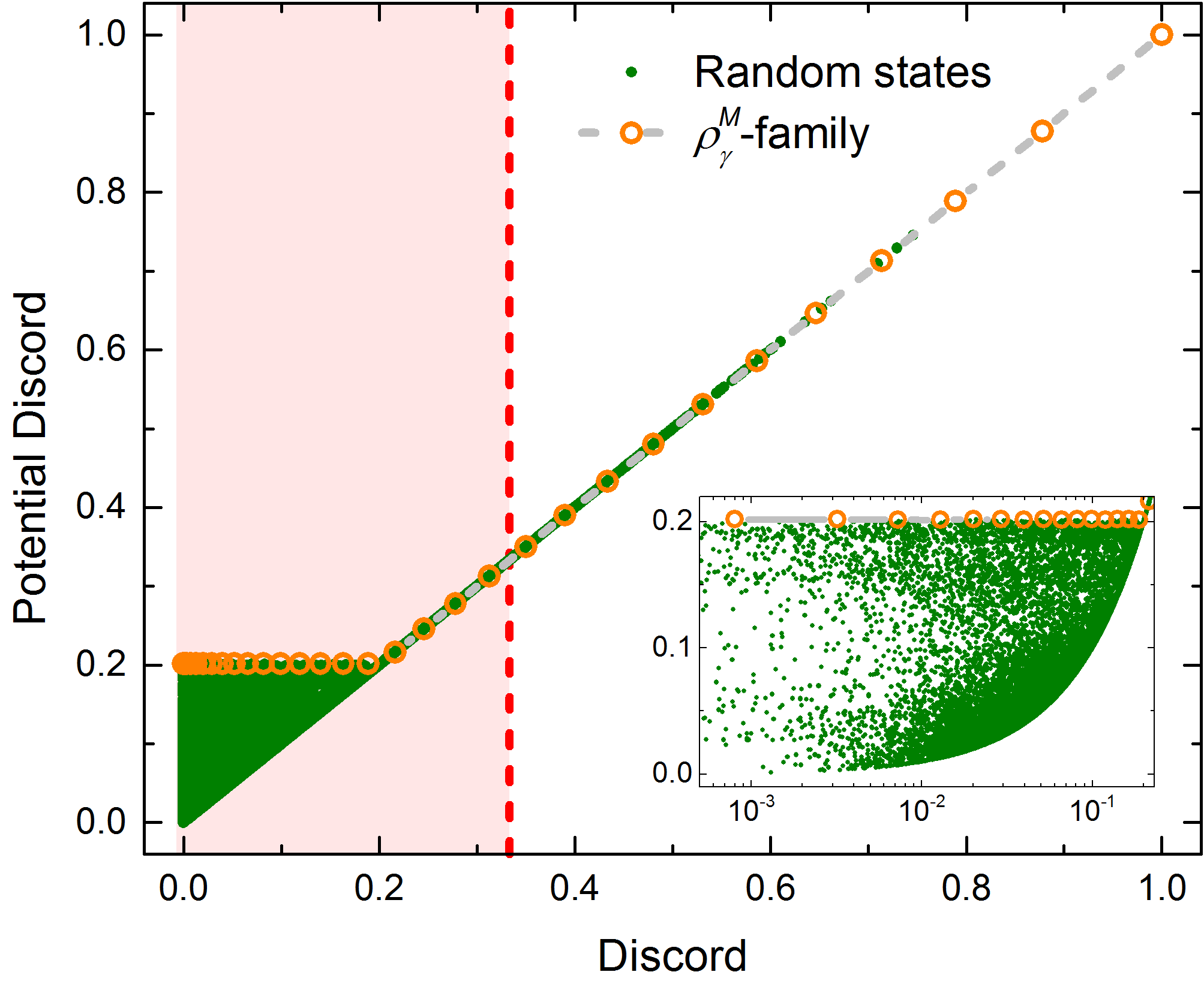}
    \caption{\label{fig:potdis_rndm}(Color online) Any separable state with QD $\leq0.2018$ can be reached from a CC state of two qubits, performing local operations. Nonetheless, there are separable states with QD above $0.2018$. The comparison between QD and PD displays this behaviour. The upper bound is given by the $\rho^M_\gamma$-family (orange circles; see text for details). Green data points correspond to $\sim10^5$ randomly generated states. The (red-shaded) region at the left of the vertical dashed line corresponds to the values of QD achievable by separable states. Inset: detail of the bottom leftmost region.}
    \end{figure}
    
    \begin{figure}
    \centering
    \includegraphics[width=.99\columnwidth]{./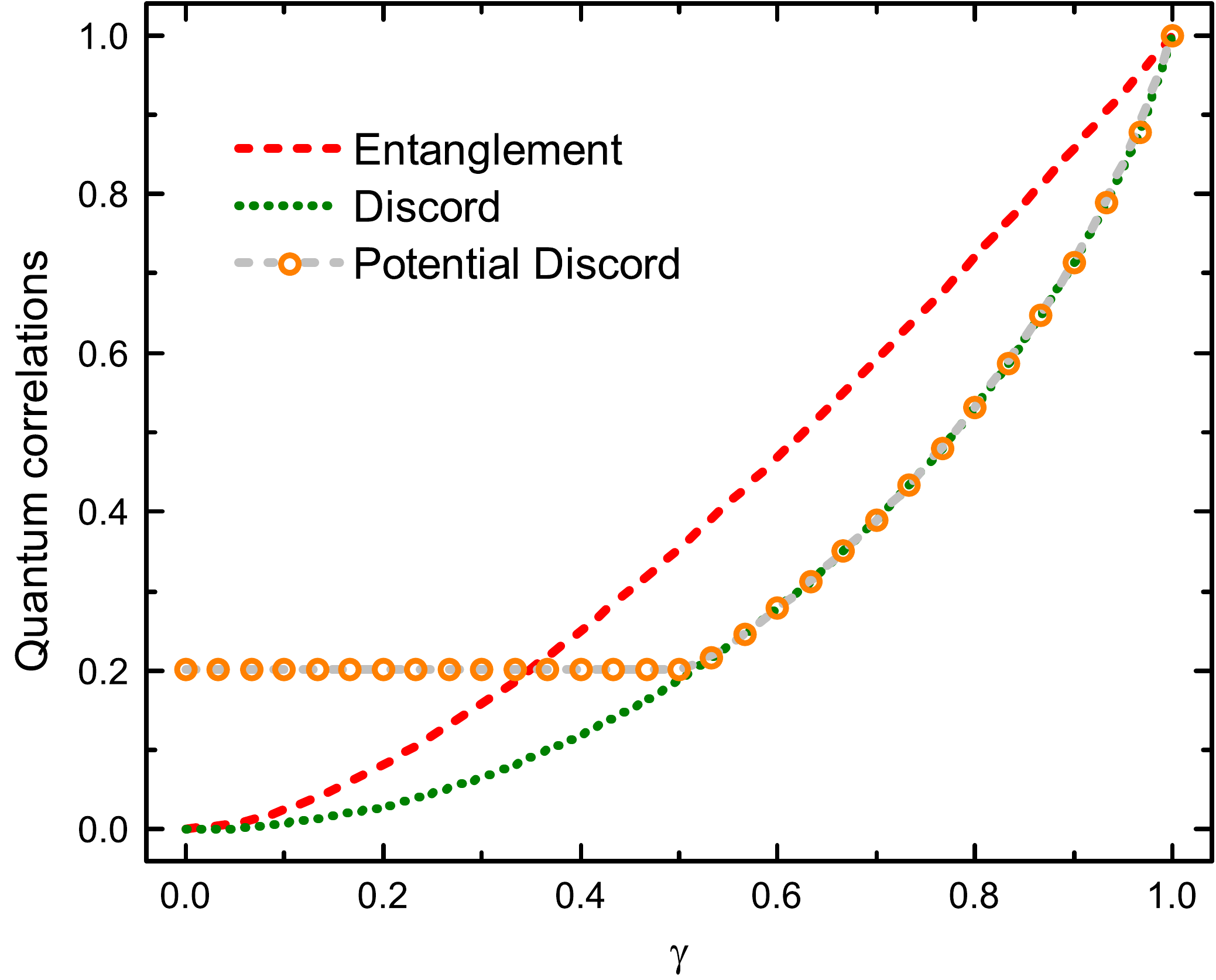}
    \caption{\label{fig:rho_gamma}(Color online) Different measures of quantum correlations for the $\rho^M_\gamma$-family. Entanglement of formation (red dashed line) is finite for any $\gamma>0$. QD (green dotted line) behaves in the same qualitative way as entanglement, while PD (orange circles) exhibits a transition: below $\gamma=1/2$ PD takes a constant minimum value and grows monotonically with $\gamma$ for $\gamma\geq1/2$.}
    \end{figure}

In Fig.~\ref{fig:potdis_rndm}, the upper bound is given by mixtures of a rank 2 CC state, $\rho_{class}=(\ket{00}\bra{00}+\ket{11}\bra{11})/2$, with the maximally-entangled state $\ket{\beta}$, i.e., by the family $\rho^M_\gamma=(1-\gamma)\rho_{class}+\gamma\ket{\beta}\bra{\beta}$, with $0\leq\gamma\leq1$. Separable states with QD above $0.2018$ cannot be reached from CC states by LO, as a direct consequence of i) the nature of the 
maximal QD of separable states with fixed rank~\cite{Bell15} and ii) the local creation of quantum correlations from CC states~\cite{Gess12}. Fig.~\ref{fig:rho_gamma} shows how entanglement, QD, and PD behave differently for the states $\rho^M_\gamma$. The left border of Fig.~\ref{fig:potdis_rndm} is reproduced by mixtures of rank 2 CC states with the maximally mixed state. Finally, the lower bound is given by isotropic states.

It is important to note that, as seen in Fig.~\ref{fig:potdis_rndm}, \emph{for those states of two-qubits with QD above $0.2018$, arbitrary local operations cannot increase their quantumness.} An open question is whether for arbitrary quantum states there is a lower bound of quantumness above which there is no local operation that increases their quantumness.

\subsection{Potential Discord under local noise}
In Ref.~\cite{Cicc12}, the authors show how local noise ---in particular, a local Markovian amplitude-damping channel (AD)--- can enhance quantum correlations for a two-qubit system initially in the fully CC state, $\rho_0=(\ket{+0}\bra{+0}+\ket{-1}\bra{-1})/2$. The authors find that, when the state is transformed under the local AD channel, QD reaches a maximal value of $0.07$. On the other hand, the state $\rho_0$ has maximal PD amongst all CC states, $\PD(\rho_0)=0.2018$, as it can be determined by a local unitary transformation of $\rho^C_{\eta}$ with $\eta=1/2$ (see Fig.~\ref{fig:potdis_class}). An immediate conclusion follows: AD is not optimal among local operations that create quantumness on $\rho_0$. Another way to understand the situation is that the authors are computing the PD restricted to a subset of LO, namely, the local AD operations. Imposing such restriction leads to a maximal PD of $0.07$. Taking $E_0=\ket{0}\bra{0}+\sqrt{1-p}\ket{1}\bra{1}$ and $E_1=\sqrt{1}\ket{0}\bra{1}$ as the corresponding Kraus operators for the AD channel, with $p=1-\text{e}^{-\Gamma t}$ and $\Gamma$ the relaxation rate, QD selects an intermediate value of $p$ as the one that maximizes quantumness, while PD sets $p=0$ (when no operation is performed at all) as the one of maximum quantumness (see Fig.~\ref{fig:amp_damp}).

    \begin{figure}
    \centering
    \includegraphics[width=.99\columnwidth]{./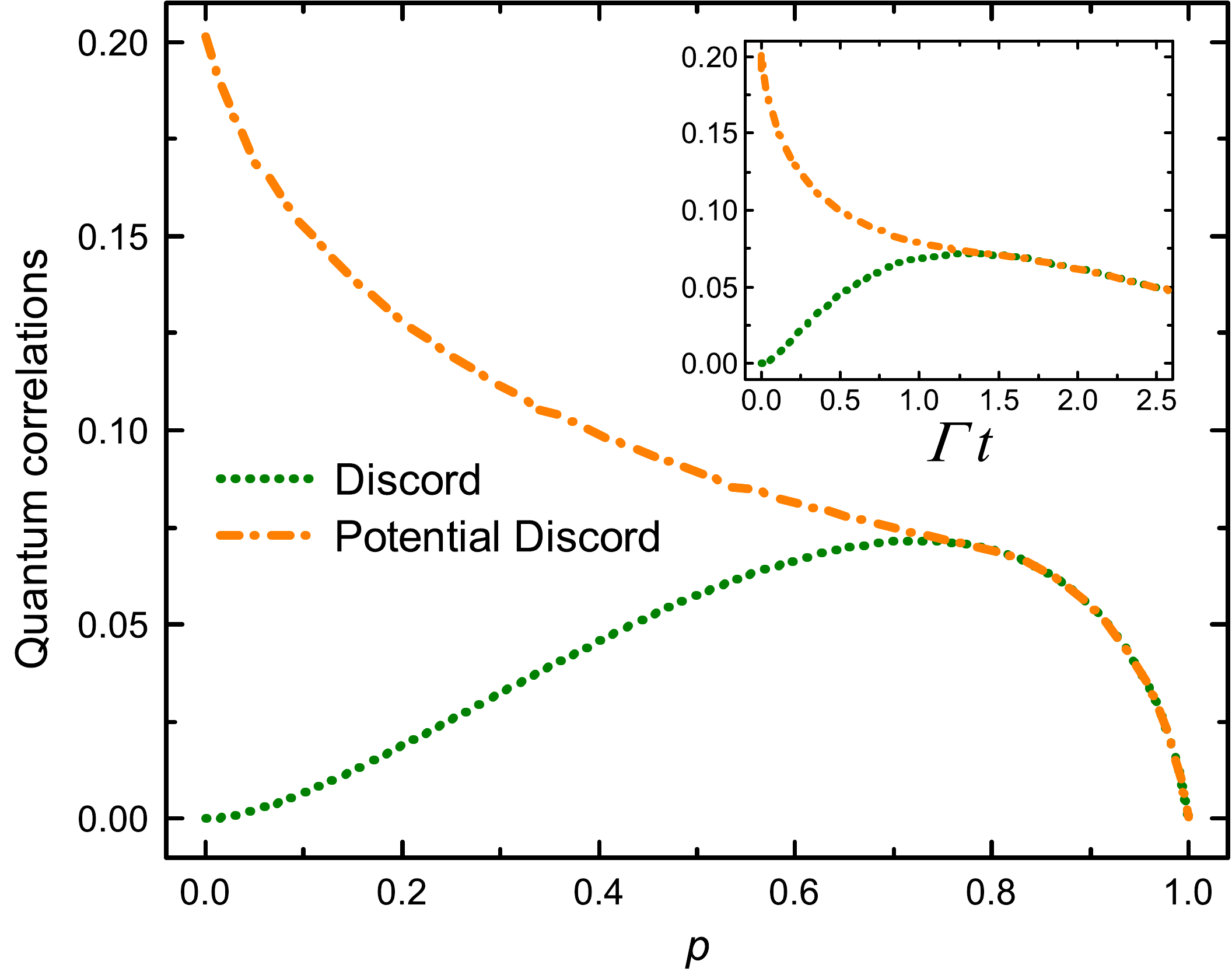}
    \caption{\label{fig:amp_damp}(Color online) Under an amplitude damping channel, QD (green dotted line) and PD (orange dashed-dotted line) exhibit different behaviours, indicating the fact that both measures reflects different aspects of the quantumness of a given state. Inset: QD and PD as a function of time (in units of the inverse of the relaxation rate $\Gamma$).}
    \end{figure}

\section{Conclusions} \label{sec:Conc}
Summing up, we have investigated the relative character of quantum correlations in bipartite states with respect to the local observables of both parts. The question is closely related to that of the effect of local operations on quantum correlations. We have proposed a measure of quantumness that takes into account this relative character. This novel quantifier involves a maximization procedure over any measure of bipartite quantum correlations that, in principle, makes it hard to compute it---considering that the usual measures of correlations involve an optimization procedure too. However, in some low-dimensional or highly symmetrical situations, our quantumness measure can be computed by taking advantage of known results. In particular, we have applied the measure to special families of states of two qubits. Also, we have compared our quantumness measure' behaviour with that Quantum Discord for the case of two qubits evolving under the effect of a local amplitude-damping channel.

We stress the fact that our presentation is not in contradiction with the one of Dakic \textit{et al.}~\cite{Daki10} and Gessner \textit{et al.}~\cite{Gess12}, who state that quantum correlations of those separable states that can be produced from CC states by local operations are not genuinely quantum. Indeed, our results highlight this fact: we observe a collapse of the quantum correlations of those separable states to a constant value of what we call the Potential Quantumness. Moreover, we point out that this quantumness-degree is already present in those states, without (and before) considering any kind of operation, as can be confirmed by measuring the appropriate local observables.

Finally, we summarize the following points, whose connections are reflected by our discussion:
    \begin{itemize}
    \item the relativity of quantum correlations with respect to the way in which the system is decomposed into subsystems; in particular, when a locality condition is imposed, this relative character depends on the preferred observables of each part;
    \item the effect of local operations on quantum correlations;
    \item the criterion of the rank of the correlation matrix to detect quantumness proposed by Daki\'c, Vedral and Brukner~\cite{Daki10}.
    \item the relation between separable and classically-correlated states, pointed out by Li and Luo~\cite{Li08}.
    \end{itemize}

\begin{acknowledgments}
This work was supported by CONICET-Argentina.
\end{acknowledgments}

\appendix

\section{Quantum correlations under global unitary operations} \label{sec:QCunderU}
When we remove the locality restriction, we are allowed to explore the whole observables' space. This situation was previously studied by Zanardi~\cite{Zana01}, where he distinguishes between \textit{virtual} and real subsystems. Also, in Zanardi \textit{et al.}~\cite{Zana04}, the authors studied the role of the relevant observables in the tensor product structure of the Hilbert space. Harshmann and Ranade~\cite{Hars11} gave a formal proof of the fact that, in the pure states case, one can tailor the observables so as to go from a situation with no entanglement to a maximal entanglement one, for any fixed state. However, for mixed states the situation is radically different ---as it can be seen, for example, taking the maximally mixed state, which is the same for any observables we choose--- and the tailoring of observables cannot place all the states on an equal footing.

Here, an important role is played by pseudo-pure states (weighted mixtures of a maximally mixed state with a pure state), namely $\rho_{a,\psi}=(1-a)(\mathds{1}/d)+a\ket{\psi}\bra{\psi}$, with $0\leq a\leq1$. For a fixed value of $a$, one can maximize the QD by taking $\ket{\psi}=\ket{\beta}$ as the maximally entangled state, so as to convert $\rho_{a,\psi}$ into an isotropic state. This can be accomplished applying a (global) unitary operation over $\ket{\psi}$, i.e.,  there is always a unitary $U_{\psi}$ such that $U_{\psi}\ket{\psi}=\ket{\beta}$. Hence, $U_{\psi}\rho_{a,\psi}U_{\psi}^{\dagger}=(1-a)(\mathds{1}/d)+a\ket{\beta}\bra{\beta}=\rho_{a,\beta}$.

    \begin{figure}
    \centering
    \includegraphics[width=.99\columnwidth]{./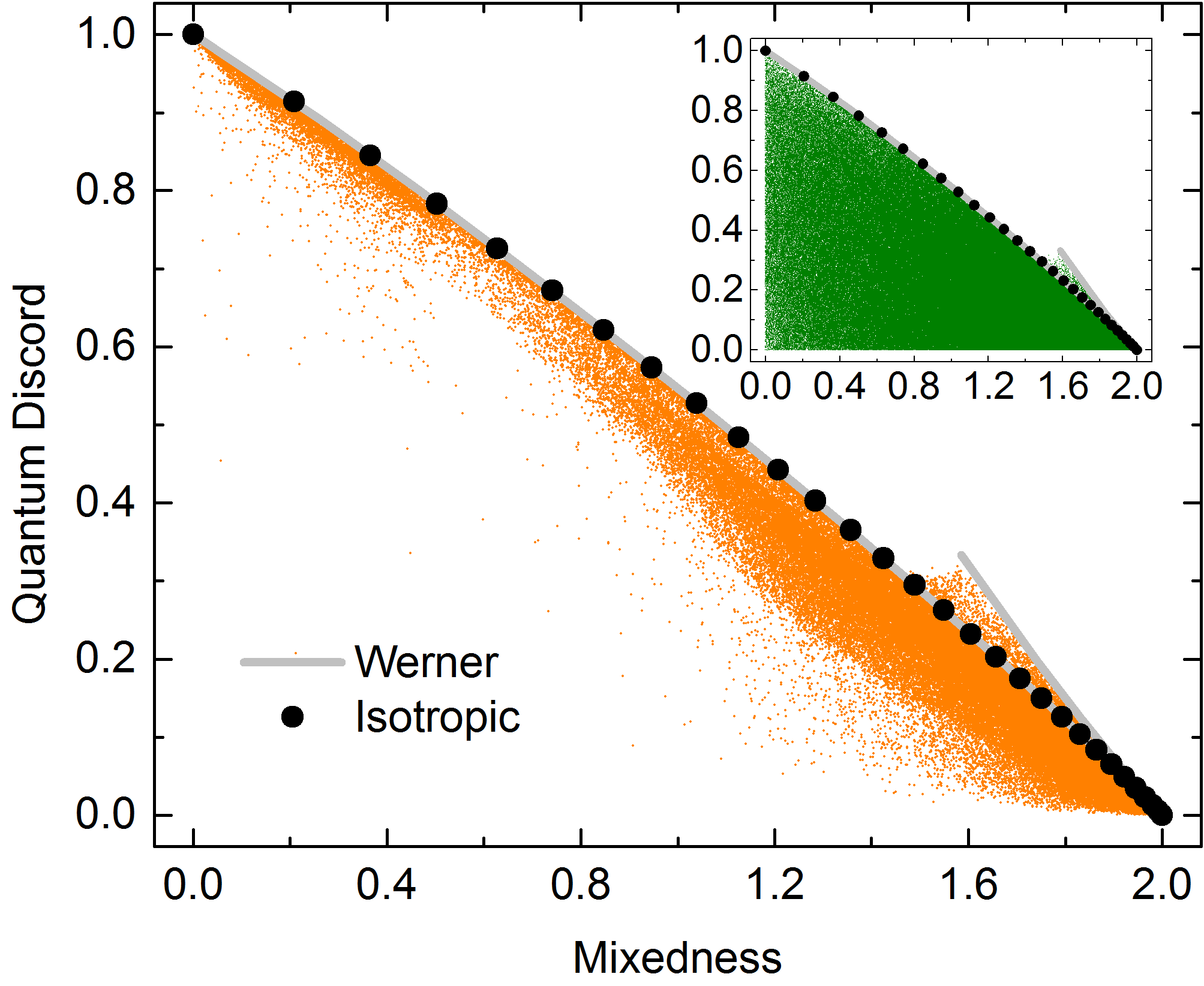}
    \caption{\label{fig:dis_underU}(Color online) Under global unitary operations, QD of a fixed state can be increased by an amount depending on the mixedness of the state. In particular, pseudo-pure states can be transformed into isotropic states (black dots), maximizing the QD for a certain value of entropy. Werner states (grey line) provide the upper bound for almost every value of entropy. Orange data points correspond to $\sim10^5$ random states. Inset: original QD for the same states. The region between isotropic and Werner states is upper bounded by families of two parameters (see Ref.~\cite{Giro11} for details).}
    \end{figure}

\section{Observables and the structure of the Hilbert space} \label{sec:PQobservables}
The simplest case may be a four qubits system, where $A$ and $B$, with $\hilb{A}\cong\mathbb{C}^4\cong\hilb{B}$, are subsystems of two qubits. The state-space is determined by density operators over $\hilb{AB}\cong(\mathbb{C}^{2})^{\otimes{4}}\cong(\mathbb{C}^{4})^{\otimes{2}}\cong\mathbb{C}^{16}$. If we impose a \textit{locality} restriction for the ${A|B}$ partition, then any unitary operation ${U^{A|B}=U^A{\otimes}U^B}$ over ${\mathbb{C}^{4}\otimes\mathbb{C}^{4}}$ acts locally over the $A$ and $B$ degrees of freedom. Thus, the action of ${U^A{\otimes}U^B}$ can be interpreted in two alternative ways:
    \begin{itemize}
    \item[a)] as a bi-local unitary transformation over the space of states;
    \item[b)] as a transformation over the spaces of local observable operators, that is a reconfiguration of the local degrees of freedom.
    \end{itemize}
Indeed, for any state $\rho^{AB}$ and any observable $O$, its expectation value is
    \begin{align*}
    \braket{O}_{U^{A|B}\rho^{AB}\ctr{U^{A|B}}} &=
    \text{Tr}[{(U^{A|B}\rho^{AB}\ctr{U^{A|B}})O}] \\
    &= \text{Tr}[{\rho^{AB}(\ctr{U^{A|B}}O U^{A|B})}] \\
    &= \braket{\ctr{U^{A|B}}O U^{A|B}}_{\rho^{AB}} \,.
    \end{align*}

What determines that ${\rho^{AB}}$ has subsystems $A$ and $B$? The determination of these subsystems is certainly not unique and usually relies on the accessible degrees of freedom of our joint system. A \textit{qubit} is an abstract entity, well-suited for the description of quantum bi-stable systems, as e.g. a spin one-half particle. For two independent particles for which the only relevant degrees of freedom are thier one-half spins, the `natural' description is given by a density operator over ${\mathbb{C}^2\otimes\mathbb{C}^2}$. The situation is better understood from the \textit{observables perspective}. The natural observables are the spin operators in $A$ and $B$, represented by the corresponding Pauli matrices ${\sigma^A_i}$ and ${\sigma^B_i}$, with $i=x,y,z$. If they represent the relevant degrees of freedom, the description of our system can be given in terms of the algebra spanned by them, ${\mathcal{O}=\text{span}\{\mathds{1}\otimes\mathds{1},\sigma_i\otimes\mathds{1},\mathds{1}\otimes\sigma_j\}}$, where $i,j=x,y,z$ and ${\mathds{1}}$ is the identity operator for $\mathbb{C}^2$. We removed the $A$ and $B$ superscripts so as to simplify the notation. The algebra $\mathcal{O}$ induces a tensor product structure over the Hilbert space of the joint system, i.e. ${\hilb{AB}=\hilb{A}\otimes\hilb{B}}$~\cite{Zana04}. But these `natural observables' are not necessarily natural at all: any unitary transformation ${\mathcal{O}\mapsto\mathcal{O}_U:=U\mathcal{O}\ctr{U}}$ could a priory be regarded,  without any further physical assumption, on an equal footing with the presumed natural spin-one.

For example, the pure non-correlated state $\ket{\psi}=\ket{00}$ are unitarily equivalent to the maximally entangled state $\ket{\beta_+}=(\ket{00}+\ket{11})/\sqrt{2}$ via a transformation $U_\psi$. Alternatively, one can assert that $\ket{\beta_+}$ is the representation of $\ket{\psi}$ in terms of the algebra $U_\psi\mathcal{O}\ctr{U_\psi}$ of observables. Harshman and Ranade gave in Ref.~\cite{Hars11} the formal proof that, for any pure state in $\mathbb{C}^N$, where $N\in\mathbb{N}$ is not a prime, the observables can be `tailored' to induce a subsystem decomposition for which the state has a desired level of entanglement, from a product state to a maximally entangled one. The situation is rather different for mixed states.

When dealing with mixed states, the unitaries do not allow one to surf the whole space of states. In particular, a unitary transformation can not change the eigenvalues of a given state, thus preserving its original entropy. Let us suppose that we start with the mixed separable ---indeed, CC--- state $\rho^{AB}_p=p\ket{00}\bra{00}+(1-p)\ket{11}\bra{11}$. A unitary transformation ${\rho^{AB}_p\mapsto U\rho^{AB}_p\ctr{U}}$ will preserve the purity and orthogonality of both components. For example, applying the ${U_\psi}$ defined above, we have $U_\psi\rho^{AB}_p\ctr{U_\psi} = p \ket{\beta_+}\bra{\beta_+} + (1-p) \ket{\beta_-}\bra{\beta_-}$, with ${\ket{\beta_\pm}:=(\ket{00}\pm\ket{11})/\sqrt{2}}$ two orthogonal maximally correlated states. For $p=0$ or $p=1$, $\rho^{AB}_p$ is pure and its transformed version becomes a Bell-type state. When, ${p\in(0,1)}$, $\rho^{AB}_p$ is never pure and its transformed version is not maximally entangled.



\bibliography{quares}

\begin{thebibliography}{45}%
\makeatletter
\providecommand \@ifxundefined [1]{%
 \@ifx{#1\undefined}
}%
\providecommand \@ifnum [1]{%
 \ifnum #1\expandafter \@firstoftwo
 \else \expandafter \@secondoftwo
 \fi
}%
\providecommand \@ifx [1]{%
 \ifx #1\expandafter \@firstoftwo
 \else \expandafter \@secondoftwo
 \fi
}%
\providecommand \natexlab [1]{#1}%
\providecommand \enquote  [1]{``#1''}%
\providecommand \bibnamefont  [1]{#1}%
\providecommand \bibfnamefont [1]{#1}%
\providecommand \citenamefont [1]{#1}%
\providecommand \href@noop [0]{\@secondoftwo}%
\providecommand \href [0]{\begingroup \@sanitize@url \@href}%
\providecommand \@href[1]{\@@startlink{#1}\@@href}%
\providecommand \@@href[1]{\endgroup#1\@@endlink}%
\providecommand \@sanitize@url [0]{\catcode `\\12\catcode `\$12\catcode
  `\&12\catcode `\#12\catcode `\^12\catcode `\_12\catcode `\%12\relax}%
\providecommand \@@startlink[1]{}%
\providecommand \@@endlink[0]{}%
\providecommand \url  [0]{\begingroup\@sanitize@url \@url }%
\providecommand \@url [1]{\endgroup\@href {#1}{\urlprefix }}%
\providecommand \urlprefix  [0]{URL }%
\providecommand \Eprint [0]{\href }%
\providecommand \doibase [0]{http://dx.doi.org/}%
\providecommand \selectlanguage [0]{\@gobble}%
\providecommand \bibinfo  [0]{\@secondoftwo}%
\providecommand \bibfield  [0]{\@secondoftwo}%
\providecommand \translation [1]{[#1]}%
\providecommand \BibitemOpen [0]{}%
\providecommand \bibitemStop [0]{}%
\providecommand \bibitemNoStop [0]{.\EOS\space}%
\providecommand \EOS [0]{\spacefactor3000\relax}%
\providecommand \BibitemShut  [1]{\csname bibitem#1\endcsname}%
\let\auto@bib@innerbib\@empty
\bibitem [{\citenamefont {Feng}\ \emph {et~al.}(2013)\citenamefont {Feng},
  \citenamefont {Xu},\ and\ \citenamefont {Long}}]{Feng13}%
  \BibitemOpen
  \bibfield  {author} {\bibinfo {author} {\bibfnamefont {G.}~\bibnamefont
  {Feng}}, \bibinfo {author} {\bibfnamefont {G.}~\bibnamefont {Xu}}, \ and\
  \bibinfo {author} {\bibfnamefont {G.}~\bibnamefont {Long}},\ }\href@noop {}
  {\bibfield  {journal} {\bibinfo  {journal} {Physical Review Letters}\
  }\textbf {\bibinfo {volume} {110}},\ \bibinfo {pages} {190501} (\bibinfo
  {year} {2013})}\BibitemShut {NoStop}%
\bibitem [{\citenamefont {Ren}\ and\ \citenamefont {Deng}(2014)}]{Ren14}%
  \BibitemOpen
  \bibfield  {author} {\bibinfo {author} {\bibfnamefont {B.-C.}\ \bibnamefont
  {Ren}}\ and\ \bibinfo {author} {\bibfnamefont {F.-G.}\ \bibnamefont {Deng}},\
  }\href@noop {} {\bibfield  {journal} {\bibinfo  {journal} {Scientific
  Reports}\ }\textbf {\bibinfo {volume} {4}} (\bibinfo {year}
  {2014})}\BibitemShut {NoStop}%
\bibitem [{\citenamefont {Browne}(2014)}]{Brow14}%
  \BibitemOpen
  \bibfield  {author} {\bibinfo {author} {\bibfnamefont {D.}~\bibnamefont
  {Browne}},\ }\href@noop {} {\bibfield  {journal} {\bibinfo  {journal} {Nature
  Physics}\ }\textbf {\bibinfo {volume} {10}},\ \bibinfo {pages} {179}
  (\bibinfo {year} {2014})}\BibitemShut {NoStop}%
\bibitem [{\citenamefont {Zhang}\ \emph {et~al.}(2014)\citenamefont {Zhang},
  \citenamefont {Kwek}, \citenamefont {Sj{\"o}qvist}, \citenamefont {Tong},\
  and\ \citenamefont {Zanardi}}]{Zhan14}%
  \BibitemOpen
  \bibfield  {author} {\bibinfo {author} {\bibfnamefont {J.}~\bibnamefont
  {Zhang}}, \bibinfo {author} {\bibfnamefont {L.-C.}\ \bibnamefont {Kwek}},
  \bibinfo {author} {\bibfnamefont {E.}~\bibnamefont {Sj{\"o}qvist}}, \bibinfo
  {author} {\bibfnamefont {D.}~\bibnamefont {Tong}}, \ and\ \bibinfo {author}
  {\bibfnamefont {P.}~\bibnamefont {Zanardi}},\ }\href@noop {} {\bibfield
  {journal} {\bibinfo  {journal} {Physical Review A}\ }\textbf {\bibinfo
  {volume} {89}},\ \bibinfo {pages} {042302} (\bibinfo {year}
  {2014})}\BibitemShut {NoStop}%
\bibitem [{\citenamefont {Zu}\ \emph {et~al.}(2014)\citenamefont {Zu},
  \citenamefont {Wang}, \citenamefont {He}, \citenamefont {Zhang},
  \citenamefont {Dai}, \citenamefont {Wang},\ and\ \citenamefont
  {Duan}}]{Zu14}%
  \BibitemOpen
  \bibfield  {author} {\bibinfo {author} {\bibfnamefont {C.}~\bibnamefont
  {Zu}}, \bibinfo {author} {\bibfnamefont {W.-B.}\ \bibnamefont {Wang}},
  \bibinfo {author} {\bibfnamefont {L.}~\bibnamefont {He}}, \bibinfo {author}
  {\bibfnamefont {W.-G.}\ \bibnamefont {Zhang}}, \bibinfo {author}
  {\bibfnamefont {C.-Y.}\ \bibnamefont {Dai}}, \bibinfo {author} {\bibfnamefont
  {F.}~\bibnamefont {Wang}}, \ and\ \bibinfo {author} {\bibfnamefont {L.-M.}\
  \bibnamefont {Duan}},\ }\href@noop {} {\bibfield  {journal} {\bibinfo
  {journal} {Nature}\ }\textbf {\bibinfo {volume} {514}},\ \bibinfo {pages}
  {72} (\bibinfo {year} {2014})}\BibitemShut {NoStop}%
\bibitem [{\citenamefont {Mong}\ \emph {et~al.}(2014)\citenamefont {Mong},
  \citenamefont {Clarke}, \citenamefont {Alicea}, \citenamefont {Lindner},
  \citenamefont {Fendley}, \citenamefont {Nayak}, \citenamefont {Oreg},
  \citenamefont {Stern}, \citenamefont {Berg}, \citenamefont {Shtengel} \emph
  {et~al.}}]{Mong14}%
  \BibitemOpen
  \bibfield  {author} {\bibinfo {author} {\bibfnamefont {R.~S.}\ \bibnamefont
  {Mong}}, \bibinfo {author} {\bibfnamefont {D.~J.}\ \bibnamefont {Clarke}},
  \bibinfo {author} {\bibfnamefont {J.}~\bibnamefont {Alicea}}, \bibinfo
  {author} {\bibfnamefont {N.~H.}\ \bibnamefont {Lindner}}, \bibinfo {author}
  {\bibfnamefont {P.}~\bibnamefont {Fendley}}, \bibinfo {author} {\bibfnamefont
  {C.}~\bibnamefont {Nayak}}, \bibinfo {author} {\bibfnamefont
  {Y.}~\bibnamefont {Oreg}}, \bibinfo {author} {\bibfnamefont {A.}~\bibnamefont
  {Stern}}, \bibinfo {author} {\bibfnamefont {E.}~\bibnamefont {Berg}},
  \bibinfo {author} {\bibfnamefont {K.}~\bibnamefont {Shtengel}},  \emph
  {et~al.},\ }\href@noop {} {\bibfield  {journal} {\bibinfo  {journal}
  {Physical Review X}\ }\textbf {\bibinfo {volume} {4}},\ \bibinfo {pages}
  {011036} (\bibinfo {year} {2014})}\BibitemShut {NoStop}%
\bibitem [{\citenamefont {Howard}\ \emph {et~al.}(2014)\citenamefont {Howard},
  \citenamefont {Wallman}, \citenamefont {Veitch},\ and\ \citenamefont
  {Emerson}}]{Howa14}%
  \BibitemOpen
  \bibfield  {author} {\bibinfo {author} {\bibfnamefont {M.}~\bibnamefont
  {Howard}}, \bibinfo {author} {\bibfnamefont {J.}~\bibnamefont {Wallman}},
  \bibinfo {author} {\bibfnamefont {V.}~\bibnamefont {Veitch}}, \ and\ \bibinfo
  {author} {\bibfnamefont {J.}~\bibnamefont {Emerson}},\ }\href@noop {}
  {\bibfield  {journal} {\bibinfo  {journal} {Nature}\ }\textbf {\bibinfo
  {volume} {510}},\ \bibinfo {pages} {351} (\bibinfo {year}
  {2014})}\BibitemShut {NoStop}%
\bibitem [{\citenamefont {Gosset}\ \emph {et~al.}(2015)\citenamefont {Gosset},
  \citenamefont {Terhal},\ and\ \citenamefont {Vershynina}}]{Goss15}%
  \BibitemOpen
  \bibfield  {author} {\bibinfo {author} {\bibfnamefont {D.}~\bibnamefont
  {Gosset}}, \bibinfo {author} {\bibfnamefont {B.~M.}\ \bibnamefont {Terhal}},
  \ and\ \bibinfo {author} {\bibfnamefont {A.}~\bibnamefont {Vershynina}},\
  }\href@noop {} {\bibfield  {journal} {\bibinfo  {journal} {Physical Review
  Letters}\ }\textbf {\bibinfo {volume} {114}},\ \bibinfo {pages} {140501}
  (\bibinfo {year} {2015})}\BibitemShut {NoStop}%
\bibitem [{\citenamefont {Horodecki}\ \emph {et~al.}(2009)\citenamefont
  {Horodecki}, \citenamefont {Horodecki}, \citenamefont {Horodecki},\ and\
  \citenamefont {Horodecki}}]{Horo09}%
  \BibitemOpen
  \bibfield  {author} {\bibinfo {author} {\bibfnamefont {R.}~\bibnamefont
  {Horodecki}}, \bibinfo {author} {\bibfnamefont {P.}~\bibnamefont
  {Horodecki}}, \bibinfo {author} {\bibfnamefont {M.}~\bibnamefont
  {Horodecki}}, \ and\ \bibinfo {author} {\bibfnamefont {K.}~\bibnamefont
  {Horodecki}},\ }\href@noop {} {\bibfield  {journal} {\bibinfo  {journal}
  {Reviews of Modern Physics}\ }\textbf {\bibinfo {volume} {81}},\ \bibinfo
  {pages} {865} (\bibinfo {year} {2009})}\BibitemShut {NoStop}%
\bibitem [{\citenamefont {Modi}\ \emph {et~al.}(2012)\citenamefont {Modi},
  \citenamefont {Brodutch}, \citenamefont {Cable}, \citenamefont {Paterek},\
  and\ \citenamefont {Vedral}}]{Modi12}%
  \BibitemOpen
  \bibfield  {author} {\bibinfo {author} {\bibfnamefont {K.}~\bibnamefont
  {Modi}}, \bibinfo {author} {\bibfnamefont {A.}~\bibnamefont {Brodutch}},
  \bibinfo {author} {\bibfnamefont {H.}~\bibnamefont {Cable}}, \bibinfo
  {author} {\bibfnamefont {T.}~\bibnamefont {Paterek}}, \ and\ \bibinfo
  {author} {\bibfnamefont {V.}~\bibnamefont {Vedral}},\ }\href@noop {}
  {\bibfield  {journal} {\bibinfo  {journal} {Reviews of Modern Physics}\
  }\textbf {\bibinfo {volume} {84}},\ \bibinfo {pages} {1655} (\bibinfo {year}
  {2012})}\BibitemShut {NoStop}%
\bibitem [{\citenamefont {Zanardi}(2001)}]{Zana01}%
  \BibitemOpen
  \bibfield  {author} {\bibinfo {author} {\bibfnamefont {P.}~\bibnamefont
  {Zanardi}},\ }\href@noop {} {\bibfield  {journal} {\bibinfo  {journal}
  {Physical Review Letters}\ }\textbf {\bibinfo {volume} {87}},\ \bibinfo
  {pages} {077901} (\bibinfo {year} {2001})}\BibitemShut {NoStop}%
\bibitem [{\citenamefont {Zanardi}\ \emph {et~al.}(2004)\citenamefont
  {Zanardi}, \citenamefont {Lidar},\ and\ \citenamefont {Lloyd}}]{Zana04}%
  \BibitemOpen
  \bibfield  {author} {\bibinfo {author} {\bibfnamefont {P.}~\bibnamefont
  {Zanardi}}, \bibinfo {author} {\bibfnamefont {D.~A.}\ \bibnamefont {Lidar}},
  \ and\ \bibinfo {author} {\bibfnamefont {S.}~\bibnamefont {Lloyd}},\
  }\href@noop {} {\bibfield  {journal} {\bibinfo  {journal} {Physical Review
  Letters}\ }\textbf {\bibinfo {volume} {92}},\ \bibinfo {pages} {060402}
  (\bibinfo {year} {2004})}\BibitemShut {NoStop}%
\bibitem [{\citenamefont {Barnum}\ \emph {et~al.}(2004)\citenamefont {Barnum},
  \citenamefont {Knill}, \citenamefont {Ortiz}, \citenamefont {Somma},\ and\
  \citenamefont {Viola}}]{Barn04}%
  \BibitemOpen
  \bibfield  {author} {\bibinfo {author} {\bibfnamefont {H.}~\bibnamefont
  {Barnum}}, \bibinfo {author} {\bibfnamefont {E.}~\bibnamefont {Knill}},
  \bibinfo {author} {\bibfnamefont {G.}~\bibnamefont {Ortiz}}, \bibinfo
  {author} {\bibfnamefont {R.}~\bibnamefont {Somma}}, \ and\ \bibinfo {author}
  {\bibfnamefont {L.}~\bibnamefont {Viola}},\ }\href@noop {} {\bibfield
  {journal} {\bibinfo  {journal} {Physical Review Letters}\ }\textbf {\bibinfo
  {volume} {92}},\ \bibinfo {pages} {107902} (\bibinfo {year}
  {2004})}\BibitemShut {NoStop}%
\bibitem [{\citenamefont {Harshman}\ and\ \citenamefont
  {Ranade}(2011)}]{Hars11}%
  \BibitemOpen
  \bibfield  {author} {\bibinfo {author} {\bibfnamefont {N.}~\bibnamefont
  {Harshman}}\ and\ \bibinfo {author} {\bibfnamefont {K.~S.}\ \bibnamefont
  {Ranade}},\ }\href@noop {} {\bibfield  {journal} {\bibinfo  {journal}
  {Physical Review A}\ }\textbf {\bibinfo {volume} {84}},\ \bibinfo {pages}
  {012303} (\bibinfo {year} {2011})}\BibitemShut {NoStop}%
\bibitem [{\citenamefont {Wu}\ \emph {et~al.}(2004)\citenamefont {Wu},
  \citenamefont {Sarandy},\ and\ \citenamefont {Lidar}}]{Wu04}%
  \BibitemOpen
  \bibfield  {author} {\bibinfo {author} {\bibfnamefont {L.-A.}\ \bibnamefont
  {Wu}}, \bibinfo {author} {\bibfnamefont {M.~S.}\ \bibnamefont {Sarandy}}, \
  and\ \bibinfo {author} {\bibfnamefont {D.~A.}\ \bibnamefont {Lidar}},\
  }\href@noop {} {\bibfield  {journal} {\bibinfo  {journal} {Physical Review
  Letters}\ }\textbf {\bibinfo {volume} {93}},\ \bibinfo {pages} {250404}
  (\bibinfo {year} {2004})}\BibitemShut {NoStop}%
\bibitem [{\citenamefont {Somma}\ \emph {et~al.}(2004)\citenamefont {Somma},
  \citenamefont {Ortiz}, \citenamefont {Barnum}, \citenamefont {Knill},\ and\
  \citenamefont {Viola}}]{Somm04}%
  \BibitemOpen
  \bibfield  {author} {\bibinfo {author} {\bibfnamefont {R.}~\bibnamefont
  {Somma}}, \bibinfo {author} {\bibfnamefont {G.}~\bibnamefont {Ortiz}},
  \bibinfo {author} {\bibfnamefont {H.}~\bibnamefont {Barnum}}, \bibinfo
  {author} {\bibfnamefont {E.}~\bibnamefont {Knill}}, \ and\ \bibinfo {author}
  {\bibfnamefont {L.}~\bibnamefont {Viola}},\ }\href@noop {} {\bibfield
  {journal} {\bibinfo  {journal} {Physical Review A}\ }\textbf {\bibinfo
  {volume} {70}},\ \bibinfo {pages} {042311} (\bibinfo {year}
  {2004})}\BibitemShut {NoStop}%
\bibitem [{\citenamefont {{\c{C}}akmak}\ \emph {et~al.}(2015)\citenamefont
  {{\c{C}}akmak}, \citenamefont {Karpat},\ and\ \citenamefont
  {Fanchini}}]{Cakm15}%
  \BibitemOpen
  \bibfield  {author} {\bibinfo {author} {\bibfnamefont {B.}~\bibnamefont
  {{\c{C}}akmak}}, \bibinfo {author} {\bibfnamefont {G.}~\bibnamefont
  {Karpat}}, \ and\ \bibinfo {author} {\bibfnamefont {F.~F.}\ \bibnamefont
  {Fanchini}},\ }\href@noop {} {\bibfield  {journal} {\bibinfo  {journal}
  {Entropy}\ }\textbf {\bibinfo {volume} {17}},\ \bibinfo {pages} {790}
  (\bibinfo {year} {2015})}\BibitemShut {NoStop}%
\bibitem [{\citenamefont {Benatti}\ \emph {et~al.}(2010)\citenamefont
  {Benatti}, \citenamefont {Floreanini},\ and\ \citenamefont
  {Marzolino}}]{Bena10}%
  \BibitemOpen
  \bibfield  {author} {\bibinfo {author} {\bibfnamefont {F.}~\bibnamefont
  {Benatti}}, \bibinfo {author} {\bibfnamefont {R.}~\bibnamefont {Floreanini}},
  \ and\ \bibinfo {author} {\bibfnamefont {U.}~\bibnamefont {Marzolino}},\
  }\href@noop {} {\bibfield  {journal} {\bibinfo  {journal} {Annals of
  Physics}\ }\textbf {\bibinfo {volume} {325}},\ \bibinfo {pages} {924}
  (\bibinfo {year} {2010})}\BibitemShut {NoStop}%
\bibitem [{\citenamefont {Balachandran}\ \emph
  {et~al.}(2013{\natexlab{a}})\citenamefont {Balachandran}, \citenamefont
  {Govindarajan}, \citenamefont {de~Queiroz},\ and\ \citenamefont
  {Reyes-Lega}}]{Bala13a}%
  \BibitemOpen
  \bibfield  {author} {\bibinfo {author} {\bibfnamefont {A.}~\bibnamefont
  {Balachandran}}, \bibinfo {author} {\bibfnamefont {T.}~\bibnamefont
  {Govindarajan}}, \bibinfo {author} {\bibfnamefont {A.~R.}\ \bibnamefont
  {de~Queiroz}}, \ and\ \bibinfo {author} {\bibfnamefont {A.}~\bibnamefont
  {Reyes-Lega}},\ }\href@noop {} {\bibfield  {journal} {\bibinfo  {journal}
  {Physical Review Letters}\ }\textbf {\bibinfo {volume} {110}},\ \bibinfo
  {pages} {080503} (\bibinfo {year} {2013}{\natexlab{a}})}\BibitemShut
  {NoStop}%
\bibitem [{\citenamefont {Balachandran}\ \emph
  {et~al.}(2013{\natexlab{b}})\citenamefont {Balachandran}, \citenamefont
  {Govindarajan}, \citenamefont {de~Queiroz},\ and\ \citenamefont
  {Reyes-Lega}}]{Bala13b}%
  \BibitemOpen
  \bibfield  {author} {\bibinfo {author} {\bibfnamefont {A.}~\bibnamefont
  {Balachandran}}, \bibinfo {author} {\bibfnamefont {T.}~\bibnamefont
  {Govindarajan}}, \bibinfo {author} {\bibfnamefont {A.~R.}\ \bibnamefont
  {de~Queiroz}}, \ and\ \bibinfo {author} {\bibfnamefont {A.}~\bibnamefont
  {Reyes-Lega}},\ }\href@noop {} {\bibfield  {journal} {\bibinfo  {journal}
  {Physical Review A}\ }\textbf {\bibinfo {volume} {88}},\ \bibinfo {pages}
  {022301} (\bibinfo {year} {2013}{\natexlab{b}})}\BibitemShut {NoStop}%
\bibitem [{\citenamefont {Iemini}\ \emph {et~al.}(2014)\citenamefont {Iemini},
  \citenamefont {Debarba},\ and\ \citenamefont {Vianna}}]{Iemi14}%
  \BibitemOpen
  \bibfield  {author} {\bibinfo {author} {\bibfnamefont {F.}~\bibnamefont
  {Iemini}}, \bibinfo {author} {\bibfnamefont {T.}~\bibnamefont {Debarba}}, \
  and\ \bibinfo {author} {\bibfnamefont {R.~O.}\ \bibnamefont {Vianna}},\
  }\href@noop {} {\bibfield  {journal} {\bibinfo  {journal} {Physical Review
  A}\ }\textbf {\bibinfo {volume} {89}},\ \bibinfo {pages} {032324} (\bibinfo
  {year} {2014})}\BibitemShut {NoStop}%
\bibitem [{\citenamefont {Killoran}\ \emph {et~al.}(2014)\citenamefont
  {Killoran}, \citenamefont {Cramer},\ and\ \citenamefont {Plenio}}]{Kill14}%
  \BibitemOpen
  \bibfield  {author} {\bibinfo {author} {\bibfnamefont {N.}~\bibnamefont
  {Killoran}}, \bibinfo {author} {\bibfnamefont {M.}~\bibnamefont {Cramer}}, \
  and\ \bibinfo {author} {\bibfnamefont {M.}~\bibnamefont {Plenio}},\
  }\href@noop {} {\bibfield  {journal} {\bibinfo  {journal} {Physical Review
  Letters}\ }\textbf {\bibinfo {volume} {112}},\ \bibinfo {pages} {150501}
  (\bibinfo {year} {2014})}\BibitemShut {NoStop}%
\bibitem [{\citenamefont {Giorgi}\ \emph {et~al.}(2011)\citenamefont {Giorgi},
  \citenamefont {Bellomo}, \citenamefont {Galve},\ and\ \citenamefont
  {Zambrini}}]{Gior11}%
  \BibitemOpen
  \bibfield  {author} {\bibinfo {author} {\bibfnamefont {G.~L.}\ \bibnamefont
  {Giorgi}}, \bibinfo {author} {\bibfnamefont {B.}~\bibnamefont {Bellomo}},
  \bibinfo {author} {\bibfnamefont {F.}~\bibnamefont {Galve}}, \ and\ \bibinfo
  {author} {\bibfnamefont {R.}~\bibnamefont {Zambrini}},\ }\href@noop {}
  {\bibfield  {journal} {\bibinfo  {journal} {Physical Review Letters}\
  }\textbf {\bibinfo {volume} {107}},\ \bibinfo {pages} {190501} (\bibinfo
  {year} {2011})}\BibitemShut {NoStop}%
\bibitem [{\citenamefont {Gessner}\ \emph {et~al.}(2012)\citenamefont
  {Gessner}, \citenamefont {Laine}, \citenamefont {Breuer},\ and\ \citenamefont
  {Piilo}}]{Gess12}%
  \BibitemOpen
  \bibfield  {author} {\bibinfo {author} {\bibfnamefont {M.}~\bibnamefont
  {Gessner}}, \bibinfo {author} {\bibfnamefont {E.-M.}\ \bibnamefont {Laine}},
  \bibinfo {author} {\bibfnamefont {H.-P.}\ \bibnamefont {Breuer}}, \ and\
  \bibinfo {author} {\bibfnamefont {J.}~\bibnamefont {Piilo}},\ }\href@noop {}
  {\bibfield  {journal} {\bibinfo  {journal} {Physical Review A}\ }\textbf
  {\bibinfo {volume} {85}},\ \bibinfo {pages} {052122} (\bibinfo {year}
  {2012})}\BibitemShut {NoStop}%
\bibitem [{\citenamefont {Ollivier}\ and\ \citenamefont
  {Zurek}(2001)}]{Olli01}%
  \BibitemOpen
  \bibfield  {author} {\bibinfo {author} {\bibfnamefont {H.}~\bibnamefont
  {Ollivier}}\ and\ \bibinfo {author} {\bibfnamefont {W.~H.}\ \bibnamefont
  {Zurek}},\ }\href@noop {} {\bibfield  {journal} {\bibinfo  {journal}
  {Physical Review Letters}\ }\textbf {\bibinfo {volume} {88}},\ \bibinfo
  {pages} {017901} (\bibinfo {year} {2001})}\BibitemShut {NoStop}%
\bibitem [{\citenamefont {Nielsen}\ and\ \citenamefont
  {Chuang}(2010)}]{Niel00}%
  \BibitemOpen
  \bibfield  {author} {\bibinfo {author} {\bibfnamefont {M.~A.}\ \bibnamefont
  {Nielsen}}\ and\ \bibinfo {author} {\bibfnamefont {I.~L.}\ \bibnamefont
  {Chuang}},\ }\href@noop {} {\emph {\bibinfo {title} {Quantum computation and
  quantum information}}}\ (\bibinfo  {publisher} {Cambridge University Press},\
  \bibinfo {year} {2010})\BibitemShut {NoStop}%
\bibitem [{\citenamefont {Daki{\'c}}\ \emph {et~al.}(2010)\citenamefont
  {Daki{\'c}}, \citenamefont {Vedral},\ and\ \citenamefont {Brukner}}]{Daki10}%
  \BibitemOpen
  \bibfield  {author} {\bibinfo {author} {\bibfnamefont {B.}~\bibnamefont
  {Daki{\'c}}}, \bibinfo {author} {\bibfnamefont {V.}~\bibnamefont {Vedral}}, \
  and\ \bibinfo {author} {\bibfnamefont {{\v{C}}.}~\bibnamefont {Brukner}},\
  }\href@noop {} {\bibfield  {journal} {\bibinfo  {journal} {Physical Review
  Letters}\ }\textbf {\bibinfo {volume} {105}},\ \bibinfo {pages} {190502}
  (\bibinfo {year} {2010})}\BibitemShut {NoStop}%
\bibitem [{\citenamefont {Streltsov}\ \emph
  {et~al.}(2011{\natexlab{a}})\citenamefont {Streltsov}, \citenamefont
  {Kampermann},\ and\ \citenamefont {Bru{\ss}}}]{Stre11b}%
  \BibitemOpen
  \bibfield  {author} {\bibinfo {author} {\bibfnamefont {A.}~\bibnamefont
  {Streltsov}}, \bibinfo {author} {\bibfnamefont {H.}~\bibnamefont
  {Kampermann}}, \ and\ \bibinfo {author} {\bibfnamefont {D.}~\bibnamefont
  {Bru{\ss}}},\ }\href@noop {} {\bibfield  {journal} {\bibinfo  {journal}
  {Physical Review Letters}\ }\textbf {\bibinfo {volume} {107}},\ \bibinfo
  {pages} {170502} (\bibinfo {year} {2011}{\natexlab{a}})}\BibitemShut
  {NoStop}%
\bibitem [{\citenamefont {Hu}\ \emph {et~al.}(2011)\citenamefont {Hu},
  \citenamefont {Gu}, \citenamefont {Gong},\ and\ \citenamefont {Guo}}]{Hu11}%
  \BibitemOpen
  \bibfield  {author} {\bibinfo {author} {\bibfnamefont {X.}~\bibnamefont
  {Hu}}, \bibinfo {author} {\bibfnamefont {Y.}~\bibnamefont {Gu}}, \bibinfo
  {author} {\bibfnamefont {Q.}~\bibnamefont {Gong}}, \ and\ \bibinfo {author}
  {\bibfnamefont {G.}~\bibnamefont {Guo}},\ }\href@noop {} {\bibfield
  {journal} {\bibinfo  {journal} {Physical Review A}\ }\textbf {\bibinfo
  {volume} {84}},\ \bibinfo {pages} {022113} (\bibinfo {year}
  {2011})}\BibitemShut {NoStop}%
\bibitem [{Note1()}]{Note1}%
  \BibitemOpen
  \bibinfo {note} {This can be proved by construction of a LU, $U$, such that
  $Q^{A|B}(U\rho ^{ext}U^\dagger )\not =0$. Let us divide the cases of $\rho
  ^{AB}$ being CC and QC. If it is QC, no extension is needed because $\rho
  ^{AB}$ is already such that $Q(\rho ^{AB})\not =0$. If, on the contrary,
  $\rho ^{AB}$ is CC, we know that it has the form $\rho ^{AB}=\DOTSB \sum@
  \slimits@ _{ij}{p_{ij}\mathinner {|{i^A}\delimiter "526930B }\mathinner
  {\delimiter "426830A {i^A}|}\otimes \mathinner {|{j^B}\delimiter "526930B
  }\mathinner {\delimiter "426830A {j^B}|}}$, for $\protect \{\mathinner
  {|{i^A}\delimiter "526930B }\protect \}$ ($\protect \{\mathinner
  {|{j^B}\delimiter "526930B }\protect \}$) orthogonal states on $\protect
  \mathcal {H}_{A}$ ($\protect \mathcal {H}_{B}$). Consider the cases when only
  two of the probabilities are different from zero and, moreover, they involve
  different local projectors, i.e. $\rho ^{AB}=p\mathinner {|{0^A}\delimiter
  "526930B }\mathinner {\delimiter "426830A {0^A}|}\otimes \mathinner
  {|{0^B}\delimiter "526930B }\mathinner {\delimiter "426830A
  {0^B}|}+(1-p)\mathinner {|{1^A}\delimiter "526930B }\mathinner {\delimiter
  "426830A {1^A}|}\otimes \mathinner {|{1^B}\delimiter "526930B }\mathinner
  {\delimiter "426830A {1^B}|}$. In this case, we can extend the state with
  ancillas of dimension $2$ on both sides: $\rho ^{ext}=\mathinner
  {|{0^{\protect \mathaccentV {bar}016{A}}}\delimiter "526930B }\mathinner
  {\delimiter "426830A {0^{\protect \mathaccentV {bar}016{A}}}|}\otimes \rho
  ^{AB}\otimes \mathinner {|{0^{\protect \mathaccentV {bar}016{B}}}\delimiter
  "526930B }\mathinner {\delimiter "426830A {0^{\protect \mathaccentV
  {bar}016{B}}}|}$. Now, $\mathinner {|{0^{\protect \mathaccentV
  {bar}016{A}}}\delimiter "526930B }\otimes \mathinner {|{0^A}\delimiter
  "526930B }$ and $\mathinner {|{0^{\protect \mathaccentV
  {bar}016{A}}}\delimiter "526930B }\otimes \mathinner {|{1^A}\delimiter
  "526930B }$ are orthogonal states on $\protect \mathcal {H}_{\protect
  \mathaccentV {bar}016{A}}\otimes \protect \mathcal {H}_{A}$, and can be
  mapped via a unitary transformation to any other pair $\protect \{\mathinner
  {|{\alpha }\delimiter "526930B },\mathinner {|{\alpha _\perp }\delimiter
  "526930B }\protect \}$ of orthogonal states. The same happens on $\protect
  \mathcal {H}_{B}\otimes \protect \mathcal {H}_{\protect \mathaccentV
  {bar}016{B}}$, where a local unitary transformation takes $\mathinner
  {|{0^B}\delimiter "526930B }\otimes \mathinner {|{0^{\protect \mathaccentV
  {bar}016{B}}}\delimiter "526930B }$ and $\mathinner {|{1^B}\delimiter
  "526930B }\otimes \mathinner {|{0^{\protect \mathaccentV
  {bar}016{B}}}\delimiter "526930B }$ into $\mathinner {|{\beta }\delimiter
  "526930B }$ and $\mathinner {|{\beta _\perp }\delimiter "526930B }$,
  respectively. Tracing out the ancillas yields $\rho ^{A'B'}=p\rho
  ^{A'}_{\alpha }\otimes \rho ^{B'}_{\beta }+(1-p)\rho ^{A'}_{\alpha _\perp
  }\otimes \rho ^{B'}_{\beta _\perp }$, which is in general a QC state, unless
  $\rho ^{A'}_{\alpha }$ and $\rho ^{A'}_{\alpha _\perp }$ have common
  eigenbasis (and the same for $B'$). If it is the case that both probabilities
  involve a common projector on one of the parties, the same procedure creates
  a QC state with respect to the other part. Generalization to higher rank
  states is straightforward.}\BibitemShut {Stop}%
\bibitem [{\citenamefont {Stinespring}(1955)}]{Stin55}%
  \BibitemOpen
  \bibfield  {author} {\bibinfo {author} {\bibfnamefont {W.~F.}\ \bibnamefont
  {Stinespring}},\ }\href@noop {} {\bibfield  {journal} {\bibinfo  {journal}
  {Proceedings of the American Mathematical Society}\ }\textbf {\bibinfo
  {volume} {6}},\ \bibinfo {pages} {211} (\bibinfo {year} {1955})}\BibitemShut
  {NoStop}%
\bibitem [{\citenamefont {Li}\ and\ \citenamefont {Luo}(2008)}]{Li08}%
  \BibitemOpen
  \bibfield  {author} {\bibinfo {author} {\bibfnamefont {N.}~\bibnamefont
  {Li}}\ and\ \bibinfo {author} {\bibfnamefont {S.}~\bibnamefont {Luo}},\
  }\href@noop {} {\bibfield  {journal} {\bibinfo  {journal} {Physical Review
  A}\ }\textbf {\bibinfo {volume} {78}},\ \bibinfo {pages} {024303} (\bibinfo
  {year} {2008})}\BibitemShut {NoStop}%
\bibitem [{\citenamefont {Bellomo}\ \emph {et~al.}(2014)\citenamefont
  {Bellomo}, \citenamefont {Majtey}, \citenamefont {Plastino},\ and\
  \citenamefont {Plastino}}]{Bell14}%
  \BibitemOpen
  \bibfield  {author} {\bibinfo {author} {\bibfnamefont {G.}~\bibnamefont
  {Bellomo}}, \bibinfo {author} {\bibfnamefont {A.~P.}\ \bibnamefont {Majtey}},
  \bibinfo {author} {\bibfnamefont {A.~R.}\ \bibnamefont {Plastino}}, \ and\
  \bibinfo {author} {\bibfnamefont {A.}~\bibnamefont {Plastino}},\ }\href@noop
  {} {\bibfield  {journal} {\bibinfo  {journal} {Physica A: Statistical
  Mechanics and its Applications}\ }\textbf {\bibinfo {volume} {405}},\
  \bibinfo {pages} {260} (\bibinfo {year} {2014})}\BibitemShut {NoStop}%
\bibitem [{\citenamefont {Bellomo}\ \emph {et~al.}(2015)\citenamefont
  {Bellomo}, \citenamefont {Plastino},\ and\ \citenamefont
  {Plastino}}]{Bell15}%
  \BibitemOpen
  \bibfield  {author} {\bibinfo {author} {\bibfnamefont {G.}~\bibnamefont
  {Bellomo}}, \bibinfo {author} {\bibfnamefont {A.}~\bibnamefont {Plastino}}, \
  and\ \bibinfo {author} {\bibfnamefont {A.~R.}\ \bibnamefont {Plastino}},\
  }\href@noop {} {\bibfield  {journal} {\bibinfo  {journal} {International
  Journal of Quantum Information}\ }\textbf {\bibinfo {volume} {13}},\ \bibinfo
  {pages} {1550015} (\bibinfo {year} {2015})}\BibitemShut {NoStop}%
\bibitem [{\citenamefont {Plastino}\ \emph {et~al.}(2015)\citenamefont
  {Plastino}, \citenamefont {Bellomo},\ and\ \citenamefont
  {Plastino}}]{Plas15}%
  \BibitemOpen
  \bibfield  {author} {\bibinfo {author} {\bibfnamefont {A.}~\bibnamefont
  {Plastino}}, \bibinfo {author} {\bibfnamefont {G.}~\bibnamefont {Bellomo}}, \
  and\ \bibinfo {author} {\bibfnamefont {A.~R.}\ \bibnamefont {Plastino}},\
  }\href@noop {} {\bibfield  {journal} {\bibinfo  {journal} {International
  Journal of Quantum Information}\ }\textbf {\bibinfo {volume} {13}},\ \bibinfo
  {pages} {1550039} (\bibinfo {year} {2015})}\BibitemShut {NoStop}%
\bibitem [{\citenamefont {Devi}\ and\ \citenamefont
  {Rajagopal}(2008)}]{Devi08}%
  \BibitemOpen
  \bibfield  {author} {\bibinfo {author} {\bibfnamefont {A.~U.}\ \bibnamefont
  {Devi}}\ and\ \bibinfo {author} {\bibfnamefont {A.}~\bibnamefont
  {Rajagopal}},\ }\href@noop {} {\bibfield  {journal} {\bibinfo  {journal}
  {Physical Review Letters}\ }\textbf {\bibinfo {volume} {100}},\ \bibinfo
  {pages} {140502} (\bibinfo {year} {2008})}\BibitemShut {NoStop}%
\bibitem [{\citenamefont {Gharibian}(2012)}]{Ghar12}%
  \BibitemOpen
  \bibfield  {author} {\bibinfo {author} {\bibfnamefont {S.}~\bibnamefont
  {Gharibian}},\ }\href@noop {} {\bibfield  {journal} {\bibinfo  {journal}
  {Physical Review A}\ }\textbf {\bibinfo {volume} {86}},\ \bibinfo {pages}
  {042106} (\bibinfo {year} {2012})}\BibitemShut {NoStop}%
\bibitem [{\citenamefont {Henderson}\ and\ \citenamefont
  {Vedral}(2001)}]{Hend01}%
  \BibitemOpen
  \bibfield  {author} {\bibinfo {author} {\bibfnamefont {L.}~\bibnamefont
  {Henderson}}\ and\ \bibinfo {author} {\bibfnamefont {V.}~\bibnamefont
  {Vedral}},\ }\href@noop {} {\bibfield  {journal} {\bibinfo  {journal}
  {Journal of Physics A: Mathematical and General}\ }\textbf {\bibinfo {volume}
  {34}},\ \bibinfo {pages} {6899} (\bibinfo {year} {2001})}\BibitemShut
  {NoStop}%
\bibitem [{\citenamefont {Cavalcanti}\ \emph {et~al.}(2011)\citenamefont
  {Cavalcanti}, \citenamefont {Aolita}, \citenamefont {Boixo}, \citenamefont
  {Modi}, \citenamefont {Piani},\ and\ \citenamefont {Winter}}]{Cava11}%
  \BibitemOpen
  \bibfield  {author} {\bibinfo {author} {\bibfnamefont {D.}~\bibnamefont
  {Cavalcanti}}, \bibinfo {author} {\bibfnamefont {L.}~\bibnamefont {Aolita}},
  \bibinfo {author} {\bibfnamefont {S.}~\bibnamefont {Boixo}}, \bibinfo
  {author} {\bibfnamefont {K.}~\bibnamefont {Modi}}, \bibinfo {author}
  {\bibfnamefont {M.}~\bibnamefont {Piani}}, \ and\ \bibinfo {author}
  {\bibfnamefont {A.}~\bibnamefont {Winter}},\ }\href@noop {} {\bibfield
  {journal} {\bibinfo  {journal} {Physical Review A}\ }\textbf {\bibinfo
  {volume} {83}},\ \bibinfo {pages} {032324} (\bibinfo {year}
  {2011})}\BibitemShut {NoStop}%
\bibitem [{\citenamefont {Madhok}\ and\ \citenamefont {Datta}(2011)}]{Madh11}%
  \BibitemOpen
  \bibfield  {author} {\bibinfo {author} {\bibfnamefont {V.}~\bibnamefont
  {Madhok}}\ and\ \bibinfo {author} {\bibfnamefont {A.}~\bibnamefont {Datta}},\
  }\href@noop {} {\bibfield  {journal} {\bibinfo  {journal} {Physical Review
  A}\ }\textbf {\bibinfo {volume} {83}},\ \bibinfo {pages} {032323} (\bibinfo
  {year} {2011})}\BibitemShut {NoStop}%
\bibitem [{\citenamefont {Fanchini}\ \emph {et~al.}(2011)\citenamefont
  {Fanchini}, \citenamefont {Cornelio}, \citenamefont {de~Oliveira},\ and\
  \citenamefont {Caldeira}}]{Fanc11}%
  \BibitemOpen
  \bibfield  {author} {\bibinfo {author} {\bibfnamefont {F.~F.}\ \bibnamefont
  {Fanchini}}, \bibinfo {author} {\bibfnamefont {M.~F.}\ \bibnamefont
  {Cornelio}}, \bibinfo {author} {\bibfnamefont {M.~C.}\ \bibnamefont
  {de~Oliveira}}, \ and\ \bibinfo {author} {\bibfnamefont {A.~O.}\ \bibnamefont
  {Caldeira}},\ }\href@noop {} {\bibfield  {journal} {\bibinfo  {journal}
  {Physical Review A}\ }\textbf {\bibinfo {volume} {84}},\ \bibinfo {pages}
  {012313} (\bibinfo {year} {2011})}\BibitemShut {NoStop}%
\bibitem [{\citenamefont {Streltsov}\ \emph
  {et~al.}(2011{\natexlab{b}})\citenamefont {Streltsov}, \citenamefont
  {Kampermann},\ and\ \citenamefont {Bru{\ss}}}]{Stre11a}%
  \BibitemOpen
  \bibfield  {author} {\bibinfo {author} {\bibfnamefont {A.}~\bibnamefont
  {Streltsov}}, \bibinfo {author} {\bibfnamefont {H.}~\bibnamefont
  {Kampermann}}, \ and\ \bibinfo {author} {\bibfnamefont {D.}~\bibnamefont
  {Bru{\ss}}},\ }\href@noop {} {\bibfield  {journal} {\bibinfo  {journal}
  {Physical Review Letters}\ }\textbf {\bibinfo {volume} {106}},\ \bibinfo
  {pages} {160401} (\bibinfo {year} {2011}{\natexlab{b}})}\BibitemShut
  {NoStop}%
\bibitem [{\citenamefont {Daki{\'c}}\ \emph {et~al.}(2012)\citenamefont
  {Daki{\'c}}, \citenamefont {Lipp}, \citenamefont {Ma}, \citenamefont
  {Ringbauer}, \citenamefont {Kropatschek}, \citenamefont {Barz}, \citenamefont
  {Paterek}, \citenamefont {Vedral}, \citenamefont {Zeilinger}, \citenamefont
  {Brukner} \emph {et~al.}}]{Daki12}%
  \BibitemOpen
  \bibfield  {author} {\bibinfo {author} {\bibfnamefont {B.}~\bibnamefont
  {Daki{\'c}}}, \bibinfo {author} {\bibfnamefont {Y.~O.}\ \bibnamefont {Lipp}},
  \bibinfo {author} {\bibfnamefont {X.}~\bibnamefont {Ma}}, \bibinfo {author}
  {\bibfnamefont {M.}~\bibnamefont {Ringbauer}}, \bibinfo {author}
  {\bibfnamefont {S.}~\bibnamefont {Kropatschek}}, \bibinfo {author}
  {\bibfnamefont {S.}~\bibnamefont {Barz}}, \bibinfo {author} {\bibfnamefont
  {T.}~\bibnamefont {Paterek}}, \bibinfo {author} {\bibfnamefont
  {V.}~\bibnamefont {Vedral}}, \bibinfo {author} {\bibfnamefont
  {A.}~\bibnamefont {Zeilinger}}, \bibinfo {author} {\bibfnamefont
  {{\v{C}}.}~\bibnamefont {Brukner}},  \emph {et~al.},\ }\href@noop {}
  {\bibfield  {journal} {\bibinfo  {journal} {Nature Physics}\ }\textbf
  {\bibinfo {volume} {8}},\ \bibinfo {pages} {666} (\bibinfo {year}
  {2012})}\BibitemShut {NoStop}%
\bibitem [{\citenamefont {Ciccarello}\ and\ \citenamefont
  {Giovannetti}(2012)}]{Cicc12}%
  \BibitemOpen
  \bibfield  {author} {\bibinfo {author} {\bibfnamefont {F.}~\bibnamefont
  {Ciccarello}}\ and\ \bibinfo {author} {\bibfnamefont {V.}~\bibnamefont
  {Giovannetti}},\ }\href@noop {} {\bibfield  {journal} {\bibinfo  {journal}
  {Physical Review A}\ }\textbf {\bibinfo {volume} {85}},\ \bibinfo {pages}
  {010102} (\bibinfo {year} {2012})}\BibitemShut {NoStop}%
\bibitem [{\citenamefont {Girolami}\ \emph {et~al.}(2011)\citenamefont
  {Girolami}, \citenamefont {Paternostro},\ and\ \citenamefont
  {Adesso}}]{Giro11}%
  \BibitemOpen
  \bibfield  {author} {\bibinfo {author} {\bibfnamefont {D.}~\bibnamefont
  {Girolami}}, \bibinfo {author} {\bibfnamefont {M.}~\bibnamefont
  {Paternostro}}, \ and\ \bibinfo {author} {\bibfnamefont {G.}~\bibnamefont
  {Adesso}},\ }\href@noop {} {\bibfield  {journal} {\bibinfo  {journal}
  {Journal of Physics A: Mathematical and Theoretical}\ }\textbf {\bibinfo
  {volume} {44}},\ \bibinfo {pages} {352002} (\bibinfo {year}
  {2011})}\BibitemShut {NoStop}%
\end{thebibliography}%

\end{document}